\patchcmd\Gread@eps{\@inputcheck#1 }{\@inputcheck"#1"\relax}{}{}
\newtheorem{lemma}{Lemma}[section]
\newtheorem{theorem}[lemma]{Theorem}
\newtheorem{definition}[lemma]{Definition}
\newtheorem{corollary}[lemma]{Corollary}
\newtheorem{observation}{Observation}
\newcommand{\col}{\mathrm{col}}
\newcommand{\fhtw}{\mathrm{fhtw}}
\title{Relational Boosted Regression Trees}
\date{}
\author{Sonia Cromp, Alireza Samadian, Kirk Pruhs\footnote{Supported in part by NSF grants CCF-1907673, CCF-2036077 and an IBM 
Faculty Award.} 
\\ snc40@pitt.edu, samadian@cs.pitt.edu, kirk@cs.pitt.edu
\\ University of Pittsburgh
}
\begin{document}

\maketitle
\begin{abstract}
   Many tasks use data housed in relational databases to train boosted regression tree models. In this paper, we give a relational adaptation of the greedy algorithm for training boosted regression trees. For the subproblem of calculating the sum of squared residuals of the dataset, which dominates the runtime of the boosting algorithm, we provide a $(1+\epsilon)$-approximation using the tensor sketch technique. Employing this approximation within the relational boosted regression trees algorithm leads to learning similar model parameters, but with asymptotically better runtime.
\end{abstract}
\section{Introduction} 

Relational databases are a common solution to storing large datasets, due to their economical space usage that limits the repetition of duplicate values. A dataset containing $d$ features, represented by columns, is stored in $\tau$ tables $T_1, T_2, \ldots, T_\tau$ that each contain a subset of the features. The design matrix $J$ with all $d$ features is not stored and must be re-constructed by joining the tables, i.e. $J=T_1 \Join T_2 \Join \cdots \Join T_\tau$. This process is considered both time- and space- intensive, but is a necessary step for machine learning algorithms that expect $J$ as input. When training boosted regression trees, the traditional approach is to compute $J$, then train the trees on the data in the design matrix. Relational algorithms\cite{moseley2020relational,abo2021relational} sidestep the need to calculate $J$, which can yield significant time and space complexity improvements. As such, altering the boosted regression trees training algorithm to rely on relational algorithms, instead of $J$, can yield time and space complexity improvements.

Boosted trees build off the classic tree model, known as a decision tree when the label is categorical and as a regression tree when the label is continuous. A decision/regression tree is a rooted binary tree such that each internal node has a condition and each leaf has a prediction. To find the prediction for a given point, we recursively check the condition for each internal node starting from the root. If the point meets the condition, we recurse on the right branch; otherwise we go to the left branch. The boosted regression tree algorithm involves greedily training multiple regression trees, each referred to as a weak regressor. The first tree is trained to predict datapoints' labels and the subsequent trees predict the datapoints' residuals. For some datapoint $x$ and $m$ weak regressors, the residual equals the datapoint's label $x_y$ minus the predictions $\hat{y}_1(x), \hat{y}_2(x), \ldots, \hat{y}_m(x)$ of all previously trained regression trees:
\begin{equation*}
r_x = x_y - \hat{y}_1(x) - \hat{y}_2(x) - \ldots - \hat{y}_m(x).
\end{equation*}
The prediction of the entire boosted regression model for some datapoint $x$ equals the sum of all weak regressors' predictions, i.e. $\hat{y}(x) = \hat{y}_1(x) + \hat{y}_2(x) + \ldots + \hat{y}_m(x)$.

Individual regression trees are commonly trained using a greedy algorithm. Training node $v$ of a regression tree consists of finding a splitting criterion of the form $J_f \leq \alpha$ for some feature $f$ of the dataset $J$ and threshold value $\alpha$ for the values $J_f$. Datapoints at $v$ that fulfill this criterion are assigned to the right child and all other datapoints are assigned to the left child of $v$; and then the algorithm recursively trains each subtree independently. To find this splitting criterion, each possible splitting threshold for each feature of the dataset is evaluated and the candidate splitting criterion yielding the lowest loss is selected\cite{loh2014fifty}. When making predictions in a regression tree, the predicted value of all examples in a leaf $\ell$ equals the average of the labels of all examples in $\ell$. If there are $n_\ell$ examples in the set $J^{(\ell)}$ of examples in leaf $\ell$, and the label of point $x$ is $x_y$, then the prediction for each of these examples will be $\sum_{x \in J^{(\ell)}} \frac{x_y}{n_\ell}$. As such, the goal is to provide a relational adaptation of this greedy algorithm for training regression trees, then to apply this adaptation to boosted regression trees. Section~\ref{sec:onetree} provides an algorithm to train a single regression tree relationally using the inside-out algorithm, a relational algorithm for evaluating SumProd queries~\cite{abo2016faq}. 

In the relational setting, the input to a weak regressor consists of all previously trained weak regressors plus the $\tau$ tables $T_1, T_2, \ldots, T_\tau$ that each contain two or more features, which are represented by columns. One of these columns is the label $y$. The goal of the relational boosted regression tree algorithm is to train a set of boosted regression trees to predict $y$ on the basis of the other columns in $J=T_1 \Join \cdots \Join T_\tau$. 

While training boosted regression trees on the design matrix can be done using well-known algorithms, there has been no algorithm proposed for training regression trees relationally. We introduce an algorithm in Section~\ref{sec:exactBoosted} for training boosted regression trees based on the algorithm for training individual regression trees as shown in Section~\ref{sec:onetree}. However, the runtime of this algorithm is dominated by the calculation of the sum of the dataset's squared residuals while determining splitting criteria for non-initial weak regressors. As such, the methods used to train a single regression tree do not translate well to training boosted regression trees. To address this issue, we next introduce an algorithm which uses sketching to find the sum of squared residuals in Section~\ref{sec:approx}. Sketching is a powerful dimensionality reduction tool for linear regression and matrix multiplication\cite{rajesh2021indatabase, woodruff2014sketching}. For more details on sketching, see Section~\ref{sec:tensorsketch}.

For acyclic joins, using SumProd queries as explained in Section~\ref{Sec:FAQ} and the inside-out algorithm~\cite{abo2016faq}, the sum of squared residuals can be exactly calculated in time $O(m^3L^2 \tau d^2 n \log n)$ where $m$ weak regressors of $L$ leaves are already trained, there are $\tau$ input tables with a total of $d$ features and the design matrix contains $n$ rows. However, using the sketching technique along with the SumProd query algorithm, an approximation of the sum of squared residuals can be calculated in time $O(m^2 L \tau d^2 n \frac{2+3^\tau}{\epsilon^2 \delta} \log n \log \frac{2+3^\tau}{\epsilon^2 \delta})$. The constant $\delta$ is set according to the desired accuracy guarantee of the sketching. With probability at least $1-\delta$, this algorithm gives a $(1+\epsilon)$ approximation for the sum of squared residuals. This sketching algorithm can then be plugged into the boosted regression trees algorithm. The time complexity of the boosted trees algorithm using sketching is asymptotically better for calculating the sum of squared residuals than the standard exact relational algorithm when $L$ is very big, which signifies that the trees are deep. Heuristically, this algorithm gives a similar result to the standard algorithm.


\subsection{Preliminary}
\subsubsection{FAQs and Relational Algorithms \label{Sec:FAQ}}

A SumProd query $Q$ consists of:
\begin{itemize}
\item A collection $T_1 , . . . , T_\tau$ of tables in which each column has an associated feature. Because all data is numerical in the regression tree setting, assume that all features are numerical. Let $F$ be the collection of all features, and $d = |F |$ be the number of features. The design matrix is $ J = T_1 \Join T_2 \Join \ldots \Join T_\tau$ , the natural join of the tables. Let $n$ denote the number of rows in the largest input table and $|J|$ denote the number of rows in J.
\item A function $q_f : R  \rightarrow S$ for each feature $f \in F$ for some base set $S$. We generally assume each $q_f$ is easy to compute.
\item Binary operations $\oplus$ and $\otimes$ such that $(S, \oplus, \otimes)$ forms a commutative semiring. Most importantly, this means that $\otimes$ distributes over $\oplus$.
\end{itemize}
Evaluating $Q$ results in
\begin{align*}
\oplus_{x \in J} \otimes_{f \in F} q_f(x_f)
\end{align*}
where $x$ is a row in $J$ and $x_f$ is the value for feature $f$ in row $x$. It is also possible for a SumProd query to group $\oplus$ by a table $T_i$. Doing so represents calculating the following term simultaneously for all the rows $r\in T_i$ 
\begin{align*}
\oplus_{x \in r \Join J} \otimes_{f \in F} q_f(x_f).
\end{align*}

\begin{lemma}[\cite{abo2016faq}]
Any SumProd query can be computed efficiently in time $O(md^2 n^{\text{fhtw}} \log n )$ where fhtw is the fractional hypertree width of the database. For an acyclic join, $\text{fhtw}=1$, so the running time is $O(md^2n \log n )$ which is polynomial.
\end{lemma}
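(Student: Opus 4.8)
The plan is to prove this via the \emph{inside-out} variable elimination scheme, exploiting the single structural property guaranteed by the semiring, namely that $\otimes$ distributes over $\oplus$. First I would model the query as a hypergraph $H=(V,E)$ whose vertices $V=F$ are the features and whose hyperedges are the feature-sets covered by the input tables $T_1,\dots,T_\tau$. Computing a tree decomposition of $H$ whose bags admit fractional edge covers of weight at most $\fhtw$, I would then fix a variable elimination order $\sigma=(v_1,\dots,v_d)$ consistent with this decomposition (e.g.\ a reverse traversal of the rooted decomposition tree), so that every attribute set ever formed during elimination is contained in some bag of the decomposition.

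The core of the algorithm is to eliminate the features one at a time. To eliminate $v_i$, I would gather all current factors that mention $v_i$, form their $\otimes$-product over the tuples that agree on the shared attributes, and then marginalize $v_i$ out with $\oplus$, producing a new factor over the residual attributes that replaces all the consumed factors. Correctness follows by induction on $i$ from the generalized distributive law: because $\otimes$ distributes over $\oplus$, the partial aggregate $\oplus_{v_i}$ can be pushed inside the global $\otimes_{f \in F}$ past every factor that does not mention $v_i$, so each elimination step preserves the value $\oplus_{x \in J} \otimes_{f \in F} q_f(x_f)$. After all $d$ eliminations the single remaining factor is the query answer; for a query grouped by a table $T_i$, one simply halts elimination before removing the attributes of $T_i$, leaving one answer per group.

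The running time is dominated by bounding the size of the intermediate factors. The key step is to show that the factor built when eliminating $v_i$ ranges over an attribute set contained in a bag of the chosen decomposition, so by the AGM fractional-edge-cover bound its number of tuples is at most $n^{\fhtw}$, where $n$ is the largest input table size. Each elimination step then costs $O(n^{\fhtw} \log n)$: the $\log n$ factor comes from sorting or indexing the tuples so that agreeing tuples can be grouped and combined, while the $\otimes$ and $\oplus$ evaluations are charged per tuple. Summing over the $O(d)$ elimination steps, accounting for the bookkeeping across the (at most $d$) factors touched per step and the $m$ aggregate values maintained simultaneously, yields the claimed $O(m d^2 n^{\fhtw} \log n)$ bound; substituting $\fhtw=1$ for an acyclic join (whose hypergraph admits a width-one decomposition by the GYO / running-intersection property) gives the polynomial $O(m d^2 n \log n)$.

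The main obstacle I anticipate is the running-time analysis rather than the correctness argument: one must establish simultaneously that some elimination order realizes the width $\fhtw$ and that every intermediate factor stays within an AGM-bounded bag, which is precisely where the fractional hypertree width enters. Controlling the arithmetic of the polynomial prefactor ($m$ and $d^2$) is routine once the per-step $n^{\fhtw} \log n$ cost has been established.
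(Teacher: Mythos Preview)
Your proposal is essentially correct and follows the standard inside-out variable-elimination argument from~\cite{abo2016faq}; note, however, that the paper does not supply its own proof of this lemma at all---it simply cites the result (and restates it as Theorem~\ref{thm:insideout} in the appendix with supporting definitions of hypertree decompositions, the AGM bound, and $\fhtw$, but still no proof). So there is no ``paper's proof'' to compare against beyond the cited reference, and your sketch is faithful to that reference's approach.
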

You may find the definition of acyclic joins and fractional hypertree width in Appendix \ref{appendix:agm}.

\subsubsection{Tensor Sketch \label{sec:tensorsketch}}
Tensor sketch~\cite{avron2014subspace} is an approximation technique widely used for linear regression with polynomial kernels and linear regression with matrices that are the result of Kronocker products. Given $\tau$ vectors $v_1,\dots,v_\tau$, where the $i$-th vector is $w_i$ dimensional, and a Kronecker product $v_1 \odot v_2 \odot \cdots \odot v_\tau$, we define $\tau$ 2-wise independent hash functions $h_1(j), h_2(j), \ldots h_\tau(j)$ such that $h_t(j) : [w_t] \rightarrow [k]$. In addition, we define $\tau$ 2-wise independent hash functions $s_1(e), s_2(j), \ldots, s_\tau(j)$ such that $s_t(j) : w_t \rightarrow [0,1]$. Let $S$ be some random matrix of $k$ rows and $m$ columns, where each row has exactly one non-zero entry. The index of the non-zero entry in some row vector $\rho$ is $H(\rho) = \sum^k_{j=1} h_j(\rho_j) \mod m$ and the sign of this entry is $\pi^k_{j=1} s_j(\rho_j)$. 

Let $\phi(v_1,\dots,v_\tau) = v_1 \odot v_2 \odot \cdots \odot v_\tau$ where $\odot$ is a Kronecker product which takes the product of all combinations of the coordinates of the input vectors. Next, let $TensorSketch(v_1,\dots,v_\tau)$ equal the result of applying tensor sketch algorithm on vectors $v_1,\dots,v_\tau$ as described in \cite{avron2014subspace} for some random selection of hash functions $h_i(j)$ and $s_i(j)$ where $1 \leq i \leq \tau$. 

\begin{theorem}\cite{avron2014subspace} \label{The:tensorsketch}
The result of $TensorSketch(v)$ is the same as $\phi(v) \cdot S$. The matrix $S$ satisfies the Approximate Matrix Product property:

Let $A$ and $B$ be vectors with $d^\tau$ rows. For $k \geq (2 + 3^\tau) / (\epsilon^2 \delta)$, we have
$$Pr[||A^T SS^T B - A^T B||^2_2 \leq \epsilon^2 ||A||^2_2 ||B||^2_2] \geq 1 - \delta.$$

\end{theorem}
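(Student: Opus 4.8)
The plan is to prove the two claims in sequence. The identity $\mathrm{TensorSketch}(v)=\phi(v)\cdot S$ is an algorithmic equivalence: the tensor sketch algorithm first applies a CountSketch to each $v_t$ and then combines these through an FFT-based convolution, and by the convolution theorem this coincides exactly with applying a single CountSketch to the full tensor product $\phi(v)$, namely the one whose bucket hash is $H(\rho)=\sum_{t=1}^{\tau}h_t(\rho_t)\bmod k$ and whose sign is $\sigma(\rho)=\prod_{t=1}^{\tau}s_t(\rho_t)$. I would state this, citing the construction of \cite{avron2014subspace}, and then work directly with the explicit matrix $S$ for the rest of the argument: for a multi-index $\rho$ over the $\tau$ coordinates, the row of $S$ indexed by $\rho$ has its single nonzero entry $\sigma(\rho)$ in column $H(\rho)$.

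For the Approximate Matrix Product property I would run a second-moment (Chebyshev) argument. Setting $Z=A^T SS^T B$ and noting that $(SS^T)_{\rho\rho'}=\sigma(\rho)\sigma(\rho')\,\mathbf{1}[H(\rho)=H(\rho')]$, the diagonal terms contribute $\sum_{\rho}A_\rho B_\rho=A^T B$ exactly, while for $\rho\neq\rho'$ the multi-indices differ in at least one coordinate $t$, and there the zero-mean pairwise-independent sign hash gives $\mathbb{E}[s_t(\rho_t)s_t(\rho'_t)]=0$, so $\mathbb{E}[\sigma(\rho)\sigma(\rho')]=\prod_{t}\mathbb{E}[s_t(\rho_t)s_t(\rho'_t)]=0$. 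Hence $\mathbb{E}[Z]=A^T B$ and $Z-A^T B=\sum_{\rho\neq\rho'}A_\rho B_{\rho'}\sigma(\rho)\sigma(\rho')\,\mathbf{1}[H(\rho)=H(\rho')]$, which is the centered quantity whose variance I must control.

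The heart of the proof, and the step I expect to be the main obstacle, is the bound on $\mathrm{Var}[Z]=\mathbb{E}[(Z-A^T B)^2]$. Expanding the square yields a sum over quadruples $(\rho,\rho',\mu,\mu')$ of $A_\rho B_{\rho'}A_\mu B_{\mu'}$ weighted by $\mathbb{E}[\sigma(\rho)\sigma(\rho')\sigma(\mu)\sigma(\mu')]\cdot\mathbb{E}[\mathbf{1}[H(\rho)=H(\rho')]\,\mathbf{1}[H(\mu)=H(\mu')]]$, using independence of the sign and bucket hashes. Both expectations factor across the $\tau$ coordinates: the sign expectation survives only when the four values fall into an even-multiplicity pattern in each coordinate, and each genuine collision of $H$ over the $k$ buckets contributes a factor of order $1/k$. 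The difficulty is purely combinatorial bookkeeping: classifying each of the $\tau$ coordinates into the constant number of admissible equality patterns and multiplying across coordinates is precisely what produces the $3^{\tau}$ blow-up. Tracking these cases together with the $1/k$ collision factors should give $\mathrm{Var}[Z]\le \frac{2+3^{\tau}}{k}\,\|A\|_2^2\,\|B\|_2^2$.

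Finally I would invoke Chebyshev's inequality, $\Pr[\,|Z-A^T B|\ge \epsilon\|A\|_2\|B\|_2\,]\le \mathrm{Var}[Z]/(\epsilon^2\|A\|_2^2\|B\|_2^2)\le (2+3^{\tau})/(k\epsilon^2)$. Substituting the hypothesis $k\ge (2+3^{\tau})/(\epsilon^2\delta)$ bounds the right-hand side by $\delta$, so the complementary event $|Z-A^T B|<\epsilon\|A\|_2\|B\|_2$ — equivalently $\|A^T SS^T B-A^T B\|_2^2\le \epsilon^2\|A\|_2^2\|B\|_2^2$, since $Z$ is a scalar — occurs with probability at least $1-\delta$, which is exactly the claimed bound.
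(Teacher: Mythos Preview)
The paper does not prove this theorem; it is quoted as a black-box result from \cite{avron2014subspace} and no argument for it appears anywhere in the text, so there is nothing in the paper to compare your proposal against. What you have outlined is essentially the proof in that cited reference (going back to Pham--Pagh and refined by Avron--Nguyen--Woodruff): unbiasedness from the zero-mean product signs, a second-moment bound obtained by classifying the coordinate-wise equality patterns of the quadruple $(\rho,\rho',\mu,\mu')$, which is exactly where the $3^{\tau}$ term appears, and then Chebyshev to convert the variance bound into the stated tail bound.

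Two small points of care. First, the collision indicator $\mathbf{1}[H(\rho)=H(\rho')]$ does \emph{not} factor across the $\tau$ coordinates as you wrote, since $H$ is a \emph{sum} modulo $k$ rather than a product; the correct step is that the combined hash $H$ inherits $3$-wise independence from the independent $h_t$, so distinct-pair collisions each occur with probability $1/k$, and it is only the sign expectation $\mathbb{E}[\sigma(\rho)\sigma(\rho')\sigma(\mu)\sigma(\mu')]$ that genuinely factors coordinate-by-coordinate. Second, the paper's preliminaries state only $2$-wise independence for both $h_t$ and $s_t$, which is weaker than what the variance computation actually needs (the original argument uses $3$-wise independence for the bucket hashes and $4$-wise for the signs so that fourth moments of $\sigma$ are controlled); this is an imprecision in the paper's setup, not a flaw in your plan, but you should flag the stronger hypothesis when you write the proof out.
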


As such, the sketched term $A^TSS^TB$ may be calculated more efficiently and used in place of the exact term $A^TB$. Tensor sketch algorithm calculates $A^TS$ and $S^T B$ efficiently, when $A$ and $B$ are the result of a Kronecker product.

\subsection{Related Works}


A few machine learning models have already been adapted for a relational setting. Linear regression and factorization machines are implemented more efficiently by \cite{rendle2013scaling}, through using repeating patterns in the design matrix. \cite{kumar2015learning,schleich2016learning,acdc2018} further improve relational linear regression and factorization machines for specific scenarios. The relational algorithms for linear regression, singular value decomposition, factorization machines, and others are unified by \cite{abo2018database}. The support vector machine training algorithms given by \cite{yang2020towards,abo2021relational} adapt this model to the relational setting. \cite{cheng2019nonlinear} create a relational algorithm for Independent Gaussian Mixture Models, which is experimentally shown to be faster than computing the design matrix. Relational algorithms for $k$-means clustering are provided in \cite{moseley2020relational}.

Section 2 contains the algorithm to train a single regression tree on relational data and the exact algorithm for boosted regression trees on relational data with cubic runtime with respect to $L$. Section 3 describes the approximation algorithm that uses tensor sketching to achieve a quadratic runtime with respect to $L$. 

\section{Exact Algorithm}
\subsection{Training a Single Regression Tree (Algorithm 1) \label{sec:onetree}}

For a particular node $v$ in a regression tree, let $J^{(v)}$ be the rows in the design matrix that satisfy all the constraints between node $v$ and the root of the regression tree. For every vertex $v$, the algorithm constructs a criterion of form $J^{(v)}_j \geq \alpha$ where $\alpha$ is a threshold for values of feature $j$ in the dataset. All points satisfying the constraint belong to the right child and all the ones not satisfying the constraint belong to the left child. This means that the training process consists of finding a threshold and a column index for every node in the regression tree. Starting from the root, we build the regression tree from top to bottom in a Breadth First Search order. To choose the splitting criterion at node $v$, the following process is repeated to calculate the mean squared error for each possible feature/column $j$ of in each table $T_i$ and for each possible threshold $\alpha$ for the feature $J_j$: 

For every $T_i$, the algorithm performs three SumProd queries grouped by $T_i$ which will be explained shortly in details. Iterating over each row $\rho$ of $T_i$, these queries gather the subset of rows in the design matrix that satisfy all splitting criteria between the root node and $v$, and have the same values as $\rho$ in the columns they share with $\rho$. The first query is $n_v = \sum_{x \in \rho \Join J^{(v)}} 1$, which equals the number of rows in $J^{(v)}$ satisfying the constraints posed by $\rho$. The second is $s_v = \sum_{x \in \rho \Join J^{(v)}} x_y$, which equals the sum of the labels of these such rows and the third is $u_v = \sum_{x \in \rho \Join J^{(v)}} x_y^2$, which equals the sum of the labels squared.

For each feature $j$ in a table $T_i$, the possible threshold values for a splitting criterion on this feature is the domain of column $j$. Let $J_j$ be the domain of column $j$ which can be found in table $T_i$. For each possible splitting threshold $\alpha \in J_j$, the number of points that would satisfy the constraint between $v$ and its left child is 
$$n_{v,j,\alpha} = \sum_{\rho \in T_i: \rho_j < \alpha} \sum_{x \in \rho \Join J^{(v)}} 1.$$
Note that the term $\sum_{x \in \rho \Join J^{(v)}} 1$ for different values of $\rho$ are precomputed using the SumProd queries grouped by table $T_j$. Similarly, the sum of labels of these such points is $$s_{v,j,\alpha} = \sum_{\rho \in T_i: \rho_j < \alpha} \sum_{x \in \rho \Join J^{(v)}} x_y,$$ and the sum of labels squared is $$u_{v,j,\alpha} = \sum_{\rho \in T_i: \rho_j < \alpha} \sum_{x \in \rho \Join J^{(v)}} x_y^2.$$ These expressions can each be determined by the SumProd queries that were already performed. All remaining points in $J^{(v)}$ satisfy the constraint between $v$ and the right child. Their number, the sum of their labels and sum of their labels squared can be calculated similarly, i.e. $m_{v,j,\alpha} =\sum_{\rho_j \geq \alpha} \sum_{x \in \rho \Join J^{(v)}} 1$ for the number, $z_{v,j,\alpha} = \sum_{\rho_j \geq \alpha} \sum_{x \in \rho \Join J^{(v)}} x_y$ for the sum of labels and $w_{v,j,\alpha} = \sum_{\rho_j \geq \alpha} \sum_{x \in \rho \Join J^{(v)}} x_y^2$ for the sum of labels squared. For each child, the predicted label for all points in $J^{(v)}$ satisfying the constraint between $v$ and that child is the average label, which is $\frac{s_{v,j,\alpha}}{n_{v,j,\alpha}}$ for the left child and $\frac{z_{v,j,\alpha}}{m_{v,j,\alpha}}$ for the right child. 

To evaluate the mean squared error for the threshold $J_j \geq \alpha$, the algorithm calculates the mean squared error at each of the children. For the left child, the MSE is calculated as 
$\frac{1}{n_{v,j,\alpha}} \sum_{\rho \in T_i: ,\rho_j < \alpha} \sum_{x \in \rho \Join J^{(v)}} (\frac{s_{v,j,\alpha}}{n_{v,j,\alpha}} - x_y)^2 = 
\frac{1}{n_{v,j,\alpha}} \sum_{\rho \in T_i: \rho_j < \alpha} \sum_{x \in \rho \Join J^{(v)}}  ((\frac{s_{v,j,\alpha}}{n_{v,j,\alpha}})^2 - 2x_y\frac{s_{v,j,\alpha}}{n_{v,j,\alpha}}  + x_y^2) =  (\frac{s_{v,j,\alpha}}{n_{v,j,\alpha}})^2 - 2s_{v,j,a}\frac{1}{n_{v,j,\alpha}}(\frac{s_{v,j,\alpha}}{n_{v,j,\alpha}}) + \frac{1}{n_{v,j,\alpha}}u_{v,j,a} = -\frac{s^2_{v,j,a}}{n^2_{v,j,a}} + \frac{u_{v,j,a}}{n_{v,j,\alpha}}$ 

and similarly for the right child as
$\sum_{\rho \in T_i: \rho_j \geq \alpha} \sum_{x \in \rho \Join J^{(v)}} (\frac{z_{v,j,\alpha}}{m_{v,j,\alpha}} - x_y)^2 = -\frac{z^2_{v,j,a}}{m^2_{v,j,a}} + \frac{w_{v,j,a}}{m_{v,j,\alpha}}$. The MSE at $v$ given this splitting threshold is the sum of the MSEs at its children, i.e. $$MSE(v,j,\alpha)=-\frac{s^2_{v,j,a}}{n^2_{v,j,a}} + \frac{u_{v,j,a}}{n_{v,j,\alpha}} -\frac{z^2_{v,j,a}}{m^2_{v,j,a}} + \frac{w_{v,j,a}}{m_{v,j,\alpha}} = -\frac{1}{n_{v}}(\frac{s^2_{v,j,a}}{n_{v,j,a}}  -\frac{z^2_{v,j,a}}{m_{v,j,a}} + u_v).$$

Calculate $MSE(v,j,\alpha)$ for each threshold $\alpha$ to find the best threshold $a$ for each feature $j$, then select the feature $g$ that yields lowest MSE. Finally, add the two new nodes as children of $v$ to the regressor tree.

\begin{theorem}
Algorithm 1 determines the splitting threshold for one node of a regression tree using $O(\tau)$ SumProd queries.
\end{theorem}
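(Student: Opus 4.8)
The plan is to account for every SumProd query the algorithm issues and then argue that all remaining work is elementary post-processing invoking no further queries. First I would fix a single table $T_i$ and exhibit the three grouped SumProd queries the algorithm performs on it. Each of the quantities $n_v = \sum_{x \in \rho \Join J^{(v)}} 1$, $s_v = \sum_{x \in \rho \Join J^{(v)}} x_y$, and $u_v = \sum_{x \in \rho \Join J^{(v)}} x_y^2$ can be phrased as a SumProd query over the commutative semiring $(\mathbb{R}, +, \times)$ grouped by $T_i$: take $\oplus = +$ and $\otimes = \times$, and choose the functions $q_f$ so that the product $\otimes_{f \in F} q_f(x_f)$ both enforces the root-to-$v$ path constraints defining $J^{(v)}$ and carries the appropriate label factor. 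Concretely, for each feature $f$ constrained by a splitting criterion on the path, let $q_f$ be the indicator that $x_f$ satisfies that criterion (and $q_f \equiv 1$ for unconstrained features); for the count query set $q_y \equiv 1$, for the label-sum query set $q_y(x_y) = x_y$, and for the squared-label query set $q_y(x_y) = x_y^2$. Then $\oplus_{x \in \rho \Join J} \otimes_{f \in F} q_f(x_f)$ evaluates to exactly $n_v$, $s_v$, or $u_v$ for the row $\rho$, and the grouping computes these simultaneously over all rows $\rho \in T_i$ in a single query each.

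Counting then gives three queries per table and $\tau$ tables, hence $3\tau = O(\tau)$ SumProd queries in total. The crux is to show that nothing beyond these $3\tau$ queries is required. I would argue that the prefix quantities $n_{v,j,\alpha}, s_{v,j,\alpha}, u_{v,j,\alpha}$ and their complements $m_{v,j,\alpha}, z_{v,j,\alpha}, w_{v,j,\alpha}$, for every feature $j$ and every candidate threshold $\alpha$, are obtained purely from the per-row outputs of the three grouped queries. Fixing a feature $j$ occurring in $T_i$, sort the rows $\rho$ of $T_i$ by $\rho_j$ and accumulate the stored per-row values $\sum_{x \in \rho \Join J^{(v)}} 1$, $\sum_{x} x_y$, $\sum_{x} x_y^2$ in sorted order; each running prefix equals the left-child quantity for the threshold $\alpha$ at that position, and the right-child quantities follow by subtracting the prefix from the totals $n_v$, $s_v$, $u_v$. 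This is sorting and addition only, so no SumProd query is invoked, regardless of how many features or thresholds are examined.

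Finally I would invoke the closed form already derived in the section, $MSE(v,j,\alpha) = -\frac{1}{n_v}\big(\frac{s^2_{v,j,\alpha}}{n_{v,j,\alpha}} - \frac{z^2_{v,j,\alpha}}{m_{v,j,\alpha}} + u_v\big)$, which expresses the loss of every candidate split as an arithmetic combination of the precomputed prefix quantities; scanning all $(j,\alpha)$ pairs to select the minimizer is therefore pure arithmetic over values already in hand. The main obstacle, and the only step requiring genuine care, is the first one: verifying that the count, label-sum, and squared-label-sum are each expressible within the SumProd template, in particular that the path constraints can be folded into the $q_f$ factors without breaking the semiring structure and without needing a separate query per threshold. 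Once that encoding is confirmed, the query count $O(\tau)$ is immediate and the remainder of the argument is bookkeeping.
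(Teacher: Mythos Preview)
Your proposal is correct and follows essentially the same approach as the paper: identify the three grouped queries (count, label sum, squared-label sum) per input table, multiply by $\tau$, and observe that the threshold search is pure post-processing on the per-row outputs. Your treatment is in fact more explicit than the paper's own proof, which simply counts $3\tau$ queries without spelling out the semiring encoding of the path constraints or the prefix-sum argument showing no further queries are needed.
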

\begin{proof}
When selecting a splitting criterion for some node $v$ in the regression tree, it is necessary to find the number of points, the sum of labels and the sum of labels squared for the rows in $J^{(v)}$. These three pieces of information are organized by the rows' values for each of the $d$ features of $J$, which is achieved via grouping by each of the $\tau$ input tables. As such, each of these three pieces of information require $\tau$ SumProd queries, meaning $O(\tau)$ SumProd queries are needed to find the splitting threshold for one node $v$.  
\end{proof}

\begin{corollary} 
Algorithm 1 trains a single regression tree with $L$ leaves and $\tau$ tables using $O(\tau L)$ SumProd queries.
\end{corollary}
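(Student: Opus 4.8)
The plan is to reduce this statement to the preceding theorem by a straightforward counting argument over the nodes of the tree. First I would recall the structure of the object produced by Algorithm 1: it is a full binary tree, since every internal node $v$ receives a splitting criterion of the form $J^{(v)}_j \geq \alpha$ that partitions $J^{(v)}$ into a left and a right child, while every leaf carries only a prediction (the average label) and no split. I would then invoke the standard fact that a full binary tree with $L$ leaves has exactly $L-1$ internal nodes, which can be checked in one line by induction on the number of internal nodes.

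Next I would observe that the only place Algorithm 1 issues SumProd queries is when it selects a splitting threshold, and that by the preceding theorem each such selection at a single node costs $O(\tau)$ SumProd queries. Since a splitting criterion is computed exactly once at each internal node and never at a leaf, and since the Breadth First Search traversal visits each internal node precisely once, the per-node bound of $O(\tau)$ is incurred exactly $L-1$ times.

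Summing the per-node cost over all internal nodes then yields a total of $(L-1)\cdot O(\tau) = O(\tau L)$ SumProd queries, which is the claimed bound. I do not expect any genuine obstacle here: because the statement is an upper bound, the argument does not need to address whether queries could be shared or amortized across nodes, so it is simply the sum of the single-node bound from the previous theorem over the $L-1$ internal nodes. The only point I would be careful to state explicitly is the leaf-to-internal-node count, so that the multiplier $L-1$ (rather than $L$) is justified before collapsing it into the asymptotic $O(\tau L)$.
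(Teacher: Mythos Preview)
Your proposal is correct and follows essentially the same approach as the paper: multiply the per-node $O(\tau)$ bound from the preceding theorem by the number of nodes. The paper's proof is actually terser and slightly coarser than yours---it simply says a tree with $L$ leaves has $O(L)$ nodes and multiplies---whereas you take the extra care to distinguish internal nodes from leaves and count exactly $L-1$ of the former; both collapse to the same $O(\tau L)$ bound.
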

\begin{proof}
Because a tree with $L$ leaves has $O(L)$ nodes and finding the threshold for one node requires $O(\tau)$ SumProd queries, Algorithm 1 requires $O(\tau L)$ SumProd queries.
\end{proof}

\begin{corollary} 
Given an acyclic join query, Algorithm 1 trains a single regression tree with $L$ leaves and $\tau$ tables in time $O(\tau L m d^2 n \log n)$.
\end{corollary}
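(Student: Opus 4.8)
The plan is to obtain the claimed running time as the product of two quantities that are already in hand: the number of SumProd queries Algorithm 1 issues, and the cost of a single such query on an acyclic join. First I would invoke the preceding corollary, which guarantees that training a tree with $L$ leaves over $\tau$ tables requires only $O(\tau L)$ SumProd queries. Next, because the input join is acyclic, its fractional hypertree width is $\fhtw = 1$, so the Lemma of \cite{abo2016faq} specializes to a per-query cost of $O(m d^2 n \log n)$. Multiplying, the total query cost is $O(\tau L)\cdot O(m d^2 n \log n) = O(\tau L m d^2 n \log n)$, matching the statement.

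The one step requiring a small argument is that the work performed between queries does not dominate. Once the three SumProd queries grouped by each table $T_i$ return the aggregates $n_v$, $s_v$, and $u_v$ for every row $\rho$, the algorithm still sweeps over the candidate thresholds $\alpha$ of each feature $j$ to assemble the cumulative quantities $n_{v,j,\alpha}$, $s_{v,j,\alpha}$, $u_{v,j,\alpha}$ together with their right-child counterparts, evaluate $MSE(v,j,\alpha)$, and select the minimizer. I would bound this by sorting each of the $d$ columns and accumulating running prefix sums, which costs $O(d n \log n)$ per node and hence $O(L d n \log n)$ across the whole tree; since this is dominated by $O(\tau L m d^2 n \log n)$, it can be absorbed into the stated bound.

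The difficulty here is bookkeeping rather than mathematical depth: the crux is to confirm that each of the $O(\tau L)$ evaluations is genuinely a SumProd query over the same acyclic instance, so that the $\fhtw = 1$ specialization of the Lemma applies uniformly and the per-query cost can be factored cleanly out of the product. With that verified, the corollary follows immediately by composing the two previously established bounds.
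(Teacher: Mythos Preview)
Your proposal is correct and follows essentially the same approach as the paper: invoke the preceding corollary for the $O(\tau L)$ query count, plug in the $O(m d^2 n \log n)$ per-query cost for acyclic joins, and multiply. Your additional care in bounding the between-query threshold-sweep work by $O(L d n \log n)$ and absorbing it into the dominant term is a detail the paper's proof omits entirely, but the core argument is identical.
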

\begin{proof}
Algorithm 1 requires $O(\tau L)$ SumProd queries, the runtime of which is $O(m d^2 n \log n)$. Thus, the runtime of Algorithm 1 is $O(\tau L m d^2 n \log n)$.
\end{proof}

\subsection{Exact Algorithm for Boosted Trees (Algorithm 2) \label{sec:exactBoosted}}

The first weak regressor, $R_1$, is trained using the same method as detailed in Algorithm 1. Assume we have trained regression trees $R_1,\dots, R_m$. The method to construct the regression tree $R_{m+1}$ is similar to Algorithm 1 except that instead of points' labels, the tree must predict the residuals
\begin{align*}
R = Y - \hat{Y}_1 - \hat{Y}_2 - \ldots - \hat{Y}_m.
\end{align*}
Where $Y$ is the vector of labels and $\hat{Y}_t$ is the vector of the $t$-th weak regressor's predictions for the set of points in the design matrix $J$.

As in Algorithm 1, for every vertex $v$, the algorithm constructs a criterion of form $J_j \geq \alpha$ where $\alpha$ is a threshold and all the points satisfying the constraint belong to the right child and all the ones not satisfying the constraint belong to the left child. Starting from the root, we build the regression tree from top to bottom in a Breadth First Search order. 

What differs from Algorithm 1 is the need to calculate, for each table $T_i$ and each row $\rho \in T_i$, the values  $\sum_{x \in \rho \Join J^{(v)}} r_x$ and $\sum_{x \in \rho \Join J^{(v)}} (r_x^2)$, which respectively equal the sum of the residuals and the sum of the residuals squared for the points that are assigned to node $v$ of the tree $R_{m+1}$ grouped by table $T_i$. To calculate $\sum_{x \in \rho \Join J^{(v)}} r_x$, note that the predicted value for each previously-built regression tree has already been calculated and can be assumed to be available without needing to perform a SumProd query. Instead, for each leaf $\ell$ of each previously built weak regressor $R_t$, let $J^{(\ell)}$ be the set of rows in the design matrix that satisfy all the constraints between node $\ell$ and the root of the regression tree $R_t$ and let $J^{(\ell,v)}$ be the intersection of $J^{(v)}$ and $J^{(\ell)}$. Calculate the SumProd query $\sum_{x \in \rho \Join J^{(\ell,v)} } F(x)$ where $F(x)$ is 1 for all tables except the last table $T_\tau$ and the predicted value of $\ell$ for table $T_\tau$. The result of this query equals the sum of predicted values for all points that are present in $J^{(v)}$ and $J^{(\ell)}$. The sum of repeating this query for each leaf $\ell$ in the set of all leaves $L_t$ in $R_t$ yields $\hat{Y}^{(v)}_t =\sum_{\ell \in L_t} \sum_{x \in \rho \Join J^{(\ell,v)}} F(x)$, the sum of predicted values from $R_t$ for all rows in $J^{(v)}$. Then, calculate the sum of labels for all points in $J^{(v)}$ by the query $J_y^{(v)} =\sum_{x \in \rho \Join J^{(v)}} x_y$ and calculate the sum of residuals for all points in $J^{(v)}$,
\begin{align*}
R^{(v)}  = J_y^{(v)} - \hat{Y}^{(v)}_1  - \hat{Y}^{(v)}_2 - \dots -  \hat{Y}^{(v)}_m.
\end{align*}
Next, let $\hat{y}_t(x)$ be the prediction of weak regressor $R_t$ for some row $x$. To obtain $\sum_{x \in \rho \Join J^{(v)}} (r_x^2)$ requires finding the values for 
\begin{align}
\sum_{x \in \rho \Join J^{(v)}} (r_x^2) &= \sum_{x \in \rho \Join J^{(v)}} (x_y - \hat{y}_1(x) - \hat{y}_2(x) - \ldots - \hat{y}_m(x))^2 \label{Eq:wantToApprox} \\
&= \sum_{x \in \rho \Join J^{(v)}} (x_y^2 - x_y\sum_{i=1}^m \hat{y}_i(x) + \sum_{i=1}^m \hat{y}_i(x) \sum_{j=1}^m \hat{y}_j(x)) \\
&= \sum_{x \in \rho \Join J^{(v)}} (x_y^2 - x_y\sum_{i=1}^m \hat{y}_i(x) + \sum_{i=1}^m \hat{y}_i(x) \sum_{j=1, j\neq i}^m \hat{y}_j(x)  + \sum_{i=1}^m \hat{y}_i^2(x)).
\end{align}
The first term may be calculated as described in Algorithm 1 and the fourth term (sum of squared regressors' predictions) may be calculated similarly to finding the sum of regressors predictions, i.e. calculate the SumProd query $\sum_{x \in \rho \Join J^{(\ell,v)}} F(x)$ where $F(x)$ is 1 for all tables except the last table $T_\tau$ and the predicted value squared of $\ell$ for table $T_\tau$. The sum of repeating this query for each leaf $\ell$ in the set of all leaves $L_t$ in $R_t$ yields $\sum_{x \in \rho \Join J^{(v)}} \hat{y}^2_t(x) =\sum_{\ell \in L_t} \sum_{x \in \rho \Join J^{(\ell,v)}} F(x)$, the sum of squared predicted values from $R_t$ for all rows in $J^{(v)}$. The second term is merely the product of $\sum_{x \in \rho \Join J^{(v)}}  x_y$ and $\sum_{x \in \rho \Join J^{(v)}}  \sum_{i=1}^m \hat{y}_i(x)$ which were already calculated in order to find the sum of residuals $R^{(v)}$.

However, the fourth term $\sum_{x \in \rho \Join J^{(v)}} \sum_{i=1}^m \hat{y}_i(x) \sum_{j=1, j\neq i}^m \hat{y}_j(x)$ requires finding the product of sums of predictions for any non-matching pair of weak regressors. For any row $x$ in $J^{(v)}$ and any two regressors $R_i$ and $R_j$, $x$ can be in $J^{(\ell_i)}$ for any leaf $\ell_i \in L_i$ of $R_i$ and $J^{(\ell_j)}$ for any leaf $\ell_j \in L_j$ of $R_j$. As such, a separate query must be performed for each pair of leaves $\ell_i$ and $\ell_j$ for every pair of regressors $R_i$ and $R_j$, of the form $\sum_{x \in \rho \Join J^{(v)}} \sum_{x \in \rho \Join J^{(i,j,v)}} F(x)$ where $J^{(i,j,v)}=J^{(\ell_i)} \cap J^{(\ell_j)} \cap J^{(v)}$, the function $F(x)$ equals the product of the predicted values in leaves $\ell_i$ and $\ell_j$ for the last table $T_\tau$ and $F(x)$ is 1 for all other tables. The sum of performing this query $L^2$ times, once for each pair of leaves, equals $\sum_{x \in \rho \Join J^{(v)}} \hat{y}_i(x) \hat{y}_j(x)$ for one pair of weak regressors $R_i$ and $R_j$. This term is computed $m^2-m$ times, once for each non-identical pair of weak regressors.

Let $r_x$ be the residual for a row $x$. After also gathering $n_{v,j,\alpha}$ using the same query as in Algorithm 1, the predicted value for the left child is $p_{v,j,a} = \sum_{\rho \in T_i: \rho_j < \alpha} \sum_{x \in \rho \Join J^{(v)}} \frac{r_x}{n_{v,j,a}}$ and the predicted value for the right child is $b_{v,j,a} = \sum_{\rho \in T_i: \rho_j \geq \alpha} \sum_{x \in \rho \Join J^{(v)}} \frac{r_x}{m_{v,j,a}}$. Then, the MSE for the left child is

\begin{align*}
\frac{1}{n_{v,j,\alpha}} \sum_{\rho \in T_i: \rho_j < \alpha} \sum_{x \in \rho \Join J^{(v)}} (p_{v,j,a} - r_x)^2 &= \frac{1}{n_{v,j,\alpha}} \sum_{\rho \in T_i: \rho_j < \alpha} \sum_{x \in \rho \Join J^{(v)}} (p_{v,j,a}^2 - 2p_{v,j,a}r_x + r_x^2) \\
&= p_{v,j,a}^2 -2p_{v,j,a}\frac{1}{n_{v,j,\alpha}}(\sum_{\rho \in T_i: \rho_j < \alpha} \sum_{x \in \rho \Join J^{(v)}} r_x) + \frac{1}{n_{v,j,\alpha}} \sum_{\rho \in T_i: \rho_j < \alpha} \sum_{x \in \rho \Join J^{(v)}} r_x^2 \\
&= -\frac{(\sum_{\rho \in T_i: \rho_j < \alpha} \sum_{x \in \rho \Join J^{(v)}} r_x)^2}{n^2_{v,j,a}} + \frac{1}{n_{v,j,\alpha}} \sum_{\rho \in T_i: \rho_j < \alpha} \sum_{x \in \rho \Join J^{(v)}} (r_x^2)
\end{align*}
Similarly, the MSE for the right child is $\frac{1}{m_{v,j,\alpha}} \sum_{\rho \in T_i: \rho_j \geq \alpha} \sum_{x \in \rho \Join J^{(v)}} (b_{v,j,a}-r_x)^2 = -\frac{(\sum_{\rho \in T_i: \rho_j \geq \alpha} \sum_{x \in \rho \Join J^{(v)}} r_x)^2}{m^2_{v,j,a}} + \frac{1}{m_{v,j,\alpha}}\sum_{\rho \in T_i: \rho_j \geq \alpha} \sum_{i \in J^{(v)}} (r_x^2)$. The MSE for $J^{(v)}$ with splitting criterion $J_j \geq \alpha$ is the sum of the MSEs for the two children of $v$ given this splitting criterion, i.e.

\begin{align*}
MSE(v,j,\alpha) =& -\frac{(\sum_{\rho \in T_i: \rho_j < \alpha} \sum_{x \in \rho \Join J^{(v)}} r_x)^2}{n^2_{v,j,a}} + \frac{1}{n_{v,j,\alpha}} \sum_{\rho \in T_i: \rho_j < \alpha} \sum_{x \in \rho \Join J^{(v)}} (r_x^2) \\
&-\frac{(\sum_{\rho \in T_i: \rho_j \geq \alpha} \sum_{x \in \rho \Join J^{(v)}} r_x)^2}{m^2_{v,j,a}} + \frac{1}{m_{v,j,\alpha}} \sum_{\rho \in T_i: \rho_j \geq \alpha} \sum_{x \in \rho \Join J^{(v)}} (r_x^2) \\
=& -\frac{1}{n_{v}} (\frac{(\sum_{\rho \in T_i: \rho_j < \alpha} \sum_{x \in \rho \Join J^{(v)}} r_x)^2}{n_{v,j,a}} - \frac{(\sum_{\rho \in T_i: \rho_j \geq \alpha} \sum_{x \in \rho \Join J^{(v)}} r_x)^2}{m_{v,j,a}} + \sum_{x \in \rho \Join J^{(v)}} (r_x^2) )
\end{align*}

Calculate $MSE(v,j,\alpha)$ for each threshold $\alpha$ to find the best threshold $a$ for each feature $j$, then select the feature $g$ yields lowest MSE. Finally, add the two new nodes as children of $v$ to the regressor tree.

\begin{theorem} \label{thm:apprsumResids}
Algorithm 2 calculates the sum of squared residuals using $O(m^2L^2 \tau)$ SumProd queries.
\end{theorem}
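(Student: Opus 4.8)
The plan is to count the SumProd queries contributed by each of the four terms in the expansion of $\sum_{x \in \rho \Join J^{(v)}} r_x^2$ displayed above, identify the dominant term, and then multiply by $\tau$, since (as in the proof of the single-tree bound) every query is grouped by one of the $\tau$ input tables $T_i$, so the whole count is performed once per table.

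First I would dispose of the two cheap terms. The first term $\sum_{x \in \rho \Join J^{(v)}} x_y^2$ is exactly the quantity $u_v$ from Algorithm 1 and costs $O(1)$ queries (per table). The second term factors as the product of $\sum_{x \in \rho \Join J^{(v)}} x_y$ and $\sum_{x \in \rho \Join J^{(v)}} \sum_{i=1}^m \hat{y}_i(x)$, both of which are already available from computing the sum of residuals $R^{(v)}$, so it adds no new queries. The fourth term $\sum_{x \in \rho \Join J^{(v)}} \sum_{i=1}^m \hat{y}_i^2(x)$ is evaluated leaf-by-leaf: for each regressor $R_t$ and each leaf $\ell \in L_t$ one query over $J^{(\ell,v)}$ suffices, giving $O(mL)$ queries.

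The heart of the count is the third (cross) term $\sum_{x \in \rho \Join J^{(v)}} \sum_{i=1}^m \hat{y}_i(x) \sum_{j \neq i} \hat{y}_j(x)$, and this is the step I expect to dominate. Because $\hat{y}_i(x)$ depends on which leaf of $R_i$ the point $x$ lands in, the product $\hat{y}_i(x)\hat{y}_j(x)$ is constant only on the intersection $J^{(i,j,v)} = J^{(\ell_i)} \cap J^{(\ell_j)} \cap J^{(v)}$ for a fixed leaf pair $(\ell_i,\ell_j)$. Hence evaluating $\sum_{x \in \rho \Join J^{(v)}} \hat{y}_i(x)\hat{y}_j(x)$ for one ordered pair of distinct regressors needs one query per leaf pair, i.e. $L^2$ queries, and there are $m^2-m$ ordered pairs of distinct regressors, for a total of $O(m^2L^2)$ queries.

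Finally I would sum the contributions: per table the count is $O(1) + O(mL) + O(m^2L^2) = O(m^2L^2)$, dominated by the cross term. Since the procedure is repeated once for each of the $\tau$ grouping tables, the total is $O(m^2L^2\tau)$, as claimed. The only bookkeeping subtlety to verify is that reusing the previously computed grouped sums (for the first and second terms) truly introduces no hidden queries; once that is confirmed, the bound follows directly from the leaf-pair/regressor-pair counting.
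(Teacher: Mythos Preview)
Your proposal is correct and follows essentially the same approach as the paper: identify the cross term $\sum_{x}\sum_{i}\hat y_i(x)\sum_{j\neq i}\hat y_j(x)$ as the dominant one requiring $O(m^2L^2)$ queries (one per leaf pair, per ordered regressor pair), then multiply by $\tau$ for the per-table grouping. The paper's own proof is terser and only spells out the dominant term, whereas you carefully account for all four terms, but the reasoning and the counting argument are the same.
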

\begin{proof}
When $m$ weak regressors have already been constructed and the splitting criterion is being selected for some node $v$ of the $(m+1)$-th weak regressor, the sum of residuals squared uses $O(m^2L^2)$ queries because the term $\sum_{x \in \rho \Join J^{(v)}} \sum_{i=1}^m \hat{y}_i(x) \sum_{j=1, j\neq i}^m \hat{y}_j(x)$ iterates over all pairs of leaves of two weak regressors, for all pairs of weak regressors. This process is repeated to group by each of the $\tau$ input tables, so the sum of squared residuals is calculated in $O(m^2L^2 \tau)$ SumProd queries. 
\end{proof}

\begin{theorem}
Algorithm 2 calculates the sum of squared residuals in time $O(m^3L^2 \tau d^2 n \log n)$.
\end{theorem}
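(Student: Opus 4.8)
The plan is to obtain the claimed running time as the product of two quantities that are already established: the number of SumProd queries that Algorithm 2 performs to compute the sum of squared residuals, and the cost of evaluating a single such query. First I would invoke Theorem~\ref{thm:apprsumResids}, which bounds the number of SumProd queries required to compute the sum of squared residuals by $O(m^2 L^2 \tau)$. The dominant contribution here comes from the cross term $\sum_{x \in \rho \Join J^{(v)}} \sum_{i=1}^m \hat{y}_i(x) \sum_{j=1, j\neq i}^m \hat{y}_j(x)$, which is evaluated once per ordered pair of leaves ($L^2$ of them) across each of the $m^2-m$ non-identical ordered pairs of weak regressors, and then repeated for each of the $\tau$ table groupings.

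Next I would bound the cost of a single query. Since the input is an acyclic join, the fractional hypertree width is $1$, so the SumProd Lemma of~\cite{abo2016faq} guarantees that each query is computable in $O(m d^2 n \log n)$ time. Before applying this bound uniformly, I would verify that every query occurring in the residual computation is genuinely a SumProd query over the original acyclic join. The weighting function $F$ used in the squared-prediction and cross-term queries only multiplies in the precomputed constant prediction values of the relevant leaves on the last table $T_\tau$; these constants can be folded into the per-feature functions $q_f$ without altering the join hypergraph, so fhtw remains $1$. Likewise, forming $J^{(i,j,v)} = J^{(\ell_i)} \cap J^{(\ell_j)} \cap J^{(v)}$ is handled by augmenting the $q_f$ with indicator factors encoding the splitting constraints between the roots and $\ell_i$, $\ell_j$, and $v$, rather than by physically materializing an intersection. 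Hence the per-query bound applies to all $O(m^2 L^2 \tau)$ queries.

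Finally, multiplying the number of queries by the per-query cost yields
\[
O(m^2 L^2 \tau) \cdot O(m d^2 n \log n) = O(m^3 L^2 \tau d^2 n \log n),
\]
as claimed.

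The step I expect to require the most care is the uniform applicability of the per-query bound in the paragraph above: one must confirm that expressing the leaf-set intersections through indicator factors in the $q_f$ preserves acyclicity and does not inflate $d$ or $n$, so that no single query exceeds $O(m d^2 n \log n)$. Granting this, the remainder is the routine multiplication displayed above, which is why the time bound is simply the query count of Theorem~\ref{thm:apprsumResids} scaled by the acyclic SumProd evaluation cost.
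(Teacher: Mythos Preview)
Your proposal is correct and follows essentially the same approach as the paper: invoke Theorem~\ref{thm:apprsumResids} for the $O(m^2L^2\tau)$ query count and multiply by the $O(md^2n\log n)$ per-query cost from the acyclic SumProd bound. Your additional care in checking that the indicator factors and leaf-set intersections preserve acyclicity and the SumProd form is more detail than the paper provides, but the underlying argument is the same.
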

\begin{proof}
Algorithm 2 requires $O(m^2L^2 \tau)$ SumProd queries to find the sum of residuals squared as shown in \ref{thm:apprsumResids}. The runtime of a SumProd query is $O(m d^2 n \log n)$, so the time to find the sum of squared residuals is $O(m^3L^2 \tau d^2 n \log n)$.
\end{proof}

\begin{theorem}
Algorithm 2 determines the splitting threshold for one node of the $m$-th regression tree using $O(m^2L^2 \tau)$ SumProd querıes.
\end{theorem}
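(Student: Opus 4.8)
The plan is to show that the $O(m^2L^2\tau)$ bound from Theorem~\ref{thm:apprsumResids} already dominates every other aggregate the node-splitting procedure must compute, so that the total query count for one node is $O(m^2L^2\tau)$. First I would enumerate the quantities the procedure needs before it can evaluate $MSE(v,j,\alpha)$ over the candidate splits: the counts $n_{v,j,\alpha}$ and $m_{v,j,\alpha}$, the grouped sums of residuals $\sum_{x \in \rho \Join J^{(v)}} r_x$, and the grouped sums of squared residuals $\sum_{x \in \rho \Join J^{(v)}} (r_x^2)$. As in Algorithm 1, each family is obtained by SumProd queries grouped by each of the $\tau$ input tables, so that a single grouped query simultaneously supplies the per-row aggregates for every row $\rho \in T_i$, and hence for every candidate threshold $\alpha$ in the domain of every feature stored in $T_i$.

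Next I would count the queries family by family. The counts use the same queries as Algorithm 1, namely $O(\tau)$ queries. The sum of residuals $R^{(v)} = J_y^{(v)} - \hat{Y}^{(v)}_1 - \cdots - \hat{Y}^{(v)}_m$ requires one query for the label sum together with, for each weak regressor $R_t$, one query per leaf $\ell \in L_t$ to form $\hat{Y}^{(v)}_t = \sum_{\ell \in L_t} \sum_{x \in \rho \Join J^{(\ell,v)}} F(x)$; this is $O(L)$ queries per regressor, hence $O(mL)$ across all $m$ regressors for a single table, and $O(mL\tau)$ once repeated to group by each table. Finally, the sum of squared residuals costs $O(m^2L^2\tau)$ queries by Theorem~\ref{thm:apprsumResids}, the dominant contribution coming from the cross term $\sum_{i=1}^m \hat{y}_i(x) \sum_{j \neq i} \hat{y}_j(x)$, which ranges over all $L^2$ pairs of leaves for each of the $m^2-m$ ordered pairs of distinct regressors.

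Summing the three families gives $O(\tau) + O(mL\tau) + O(m^2L^2\tau) = O(m^2L^2\tau)$, since the squared-residual term dominates whenever $m \geq 1$ and $L \geq 1$. The last step I would make explicit is that no additional queries are incurred by sweeping over candidate thresholds: once the grouped aggregates are in hand, each $MSE(v,j,\alpha)$ is evaluated purely arithmetically from the precomputed per-row sums via the closed form derived above, and selecting the best $(j,\alpha)$ is a scan over these values.

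The main obstacle is not the arithmetic but justifying that the threshold sweep is free in the query model: I must argue that grouping each SumProd query by $T_i$ yields, in one query, the aggregates needed for all thresholds $\alpha$ of all features residing in $T_i$, so that the per-threshold MSE computation reduces to prefix-style sums over the grouped output rather than to fresh queries. Granting this mechanism, which is exactly the one used in Algorithm 1, the bound follows immediately by adding the three query counts and keeping the dominant $O(m^2L^2\tau)$ term.
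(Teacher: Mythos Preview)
Your proposal is correct and follows essentially the same approach as the paper: enumerate the three aggregates needed at node $v$ (counts, grouped sums of residuals, grouped sums of squared residuals), bound each by $O(\tau)$, $O(mL\tau)$, and $O(m^2L^2\tau)$ respectively, and observe that the last term dominates. Your added remark that the threshold sweep incurs no further queries because grouping by $T_i$ already supplies all per-row aggregates is a helpful clarification the paper leaves implicit.
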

\begin{proof}
When $m$ weak regressors have already been constructed and the splitting criterion is being selected for some node $v$ of the $(m+1)$-th weak regressor, the following items must be gathered for each child of $v$, each feature $j$ of each table $T_t$ and possible splitting threshold $\alpha$ for feature $j$: the number of points, the sum of the residuals and the sum of residuals squared. Grouping by just one input table, the number of points can be gathered with one SumProd query and the sum of residuals requires $O(mL)$ SumProd queries. However, the sum of residuals squared uses $O(m^2L^2)$ queries as shown in Theorem~\ref{thm:apprsumResids}. This process is repeated to group by each of the $\tau$ input tables. As such, Algorithm 2 requires $O(m^2L^2 \tau)$ SumProd queries to find the splitting threshold for one node $v$.
\end{proof}

\begin{theorem}
Algorithm 2 can train the $m$-th regression tree with $L$ leaves using the greedy algorithm using $O(m^2L^3 \tau)$ SumProd querıes.
\end{theorem}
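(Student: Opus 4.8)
The plan is to obtain the bound by aggregating the per-node query cost over every node of the tree being trained, exactly mirroring the argument used in the corollary for Algorithm~1. First I would recall the standard structural fact that a binary tree with $L$ leaves has $O(L)$ nodes in total (a full binary tree with $L$ leaves has exactly $L-1$ internal nodes and hence $2L-1$ nodes). Since the greedy algorithm processes the tree in Breadth First Search order and trains each node independently, the total number of SumProd queries is simply the sum, over all nodes $v$ of the $(m+1)$-th tree, of the number of queries used to select the splitting threshold at $v$.

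Next I would invoke the immediately preceding theorem, which establishes that determining the splitting threshold for a single node $v$ of the $(m+1)$-th weak regressor costs $O(m^2 L^2 \tau)$ SumProd queries; here the factor $m^2 L^2$ arises from evaluating the cross term $\sum_{x} \sum_{i} \hat{y}_i(x) \sum_{j \neq i} \hat{y}_j(x)$, which ranges over all pairs of leaves across all pairs of the $m$ previously-trained regressors, and the factor $\tau$ from repeating the grouping over each of the input tables. Because this bound holds uniformly for every node and there are $O(L)$ nodes in the new tree, multiplying the two quantities yields $O(L) \cdot O(m^2 L^2 \tau) = O(m^2 L^3 \tau)$ SumProd queries for training the whole tree.

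There is essentially no serious obstacle here; the only point requiring a moment of care is to keep the two roles of $L$ separate. The factor $L^2$ in the per-node cost counts pairs of leaves in the already-built regressors $R_1,\dots,R_m$, whereas the extra factor of $L$ comes from the $O(L)$ nodes of the new tree $R_{m+1}$ being constructed. Since the paper assumes every weak regressor has $L$ leaves, these uses coincide in the final expression, and the stated bound follows immediately by the same multiply-per-node argument used for Algorithm~1.
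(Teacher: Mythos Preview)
Your proposal is correct and follows essentially the same approach as the paper: invoke the preceding per-node bound of $O(m^2 L^2 \tau)$ SumProd queries and multiply by the $O(L)$ nodes of the tree being built. The paper's proof is in fact a terse one-liner to this effect, and your additional remark distinguishing the two roles of $L$ is a helpful clarification that the paper leaves implicit.
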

\begin{proof}
This process is repeated at each non-terminal node $v$ of the tree where there are $O(L)$ leaves. As such, Algorithm 2 requires $O(m^2L^3 \tau)$ SumProd queries.
\end{proof}

\begin{corollary}
The runtime of Algorithm 2 is $O(m^3 L^3 \tau d^2 n \log n)$.
\end{corollary}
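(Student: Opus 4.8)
The plan is to obtain the stated bound as a direct product of two quantities already established in the excerpt: the number of SumProd queries needed to train the $m$-th tree, and the cost of evaluating a single SumProd query. No new algorithmic idea is required; the corollary is a bookkeeping composition of the immediately preceding theorem with the FAQ lemma.

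First I would invoke the preceding theorem, which states that training the $m$-th regression tree with $L$ leaves via the greedy procedure of Algorithm 2 uses $O(m^2 L^3 \tau)$ SumProd queries. Second I would apply the FAQ lemma specialized to acyclic joins, where $\fhtw = 1$, under which each SumProd query can be evaluated in time $O(m d^2 n \log n)$. Multiplying the query count by the per-query cost then gives
\[
O(m^2 L^3 \tau) \cdot O(m d^2 n \log n) = O(m^3 L^3 \tau d^2 n \log n),
\]
which is exactly the claimed runtime.

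The only point that warrants care — and it is the closest thing to an obstacle — is tracking the factor of $m$ correctly: the $m$ appearing in the per-query evaluation cost $O(m d^2 n \log n)$ composes with the $m^2$ in the query-count bound to produce the cubic dependence $m^3$ in the final expression, rather than the quadratic dependence one might naively read off from the query count alone. Beyond this the argument is a routine substitution, and I would note only that the acyclicity hypothesis (so that the fractional hypertree width equals $1$ and the SumProd running time is polynomial) is inherited from the results being composed and must be stated as a standing assumption.
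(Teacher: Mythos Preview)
Your argument is correct and matches the paper's proof essentially verbatim: multiply the $O(m^2 L^3 \tau)$ query count from the preceding theorem by the $O(m d^2 n \log n)$ per-query cost from the FAQ lemma for acyclic joins. Your additional remarks about tracking the extra factor of $m$ and the standing acyclicity assumption are accurate but go slightly beyond what the paper spells out.
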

\begin{proof}
Algorithm 2 requires $O(m^2L^3 \tau)$ SumProd queries. The runtime of a SumProd query is $O(m d^2 n \log n)$.
Hence, the runtime of Algorithm 2 is $O(m^3 L^3 \tau d^2 n \log n)$.
\end{proof}

\section{Approximation Algorithm \label{sec:approx}}
Assume that $m$ weak regressors have already been trained and node $v$ of $R_{m+1}$ is being evaluated to determine its splitting criterion. The time complexity of the boosted algorithm can be improved by approximating the term that dominates the runtime of the exact boosted algorithm, which is 
$$\sum_{x \in \rho \Join J^{(v)}} (r_x^2) = \sum_{x \in \rho \Join J^{(v)}} (x_y - \hat{y}_1(x) - \hat{y}_2(x) - \ldots - \hat{y}_m(x))^2$$
as given in Equation~\ref{Eq:wantToApprox}. Let the $i$-th element of some vector $\bar{q}$ be denoted by $\bar{q}_i$. Then, the term $\sum_{x \in \rho \Join J^{(v)}} (r_x^2)$ can also be considered the L2 squared norm of vectors, i.e.
$$\sum_{x \in \rho \Join J^{(v)}} (r_x^2) =  || Y - \hat{Y}_1 - \hat{Y}_2 - \ldots - \hat{Y}_m ||^2_2$$ 
At index $i$, $Y$ contains the label of the $i$-th point and $\hat{Y}_j$ contains the $j$-th regression tree's prediction for the $i$-th point in $J$. Each of these vectors is length $|J|$ and has zeros at the indices corresponding to points in $J$ but not in $J^{(v)}$. To approximate $\sum_{x \in \rho \Join J^{(v)}} (r_x^2)$, the vectors $Y, \hat{Y}_1, \hat{Y}_2, \ldots, \hat{Y}_m$ are each sketched to find $Y', \hat{Y}_1', \hat{Y}_2', \ldots, \hat{Y}_m'$. Because sketching is a linear operator, $Y' - \hat{Y}_1' - \hat{Y}_2' - \ldots - \hat{Y}_m'$ is the sketch of $Y - \hat{Y}_1 - \hat{Y}_2 - \ldots - \hat{Y}_m$.

For one weak regressor $R$, the vector $\hat{Y}$ can be sketched as a vector $\hat{Y}'$. For simplicity, consider sketching $\hat{Y}'$ for just one fixed point $x \in \rho \Join J^{(v)}$. Vector $Y$ is the sum of $L$ vectors $p_1, p_2, \ldots, p_L$ of length $|J|$. The vector $p_i$ represents the predictions of the regressor's $i$-th leaf $\ell_i$ for the rows of the design matrix that are assigned to $\ell_i$ and to node $v$. This set of rows can be expressed as $J^{(i,v)} = J^{(\ell_i)} \cap J^{(v)}$. Because each point in $J$ is only in one of the sets $J^{(1,v)}, J^{(2,v)}, \ldots, J^{(L,v)}$, all but one of these vectors consists entirely of zeroes when considering just the fixed point $x$. If $x$ is in set $J^{(i,v)}$, then the vector $p_i$ has only one non-zero element. This element is at the index corresponding to the index of $x$ in the design matrix $J$ and is equal to the prediction at leaf $\ell_i$ of tree $R$.

The vector $p_i$ is the product of $\tau$ vectors, one for each of the tables as explained in the following. The sketching algorithm is applied on these vectors representing tables and is used to approximate the vector $p_k$ representing the predictions of one leaf $\ell_i$ of one regressor $R$.

For each feature $J_j$ represented in the columns of tables $T_1, T_2, \ldots, T_\tau$, assign feature $J_j$ to one of the tables containing this feature. Let $E_t$ be the set of features assigned to table $T_t$ and let $D_t$ be the domain of $E_t$. Let $\pi_t(x)$ be the projection of $x$ onto $T_t$, which is just the features of $x$ that are in $E_t$. Let $w_t(x)$ be the index of $\pi_t(x)$ in $D_t$, assuming some ordering of $D_t$. Let $e_{w_t(x)}$ be a vector of $|D_t|$ elements with 1 in the index $w_t(x)$ and 0 elsewhere. Let $d_i$ be the predicted value at leaf $\ell_i$ of regressor $R$. Then, 
$p_{i,x} = d_i ( e_{w_1(x)} \otimes e_{w_2(x)} \otimes \cdots \otimes e_{w_\tau(x)} )$ is representing the prediction of leaf $\ell_i$ for $x$. This vector has the prediction for $x$ at the index corresponding to the index of $x$ in $J$ and zeros at all other indices. Then, $p_i = \sum_{x \in \rho \Join J^{(i,v)}} p_{i,x}$. Calculating and summing together the vectors $p_i$ for all $1 \leq i \leq L$ equals $Y$, the vector of predictions for weak regressor $R$ for all points in $J^{(v)}$. 

Sketching a vector $v$ is equivalent to calculating the multiplication $Sv$ for some matrix $S$. Since $S$ is very sparse and has special properties, we do not perform the calculation explicitly. However, the result would be the same as the product. Therefore, the sketch of a summation of vectors is equivalent to the summation of the sketch of those vectors. Therefore, to sketch $Y - \hat{Y}_1 - \hat{Y}_2 - \ldots - \hat{Y}_m$, we can calculate the summation over the sketch of the vectors $p_{k,x}$ which, as we show shortly, can be computed using SumProd queries.

The sketching is applied to the term $e_{w_1(x)} \otimes e_{w_2(x)} \otimes \cdots \otimes e_{w_\tau(x)}$ for some fixed $x \in \rho \Join J^{(i,v)}$ to calculate a smaller vector $p_{i,x}'$. Define $\tau$ 2-wise independent hash functions $h_1(j), h_2(j), \ldots h_\tau(j)$ such that $h_t(j) : |D_t| \rightarrow k$. Also define $\tau$ 2-wise independent hash functions $s_1(e), s_2(j), \ldots, s_\tau(j)$ such that $s_t(j) : |D_t| \rightarrow [0,1]$. Define some integer $k$ depending on the desired accuracy probability and runtime. Let $e_{w_i (x)}(j)$ be the element in the $j$-th index of $e_{w_i (x)}$. Then, for each $t$ where $1 \leq t \leq \tau$ calculate
\begin{align*}
g_t(e_{w_t (x)}) &= \sum^{|D_t|}_{j=1} s_t(j) * e_{w_t (x)}(j) * x^{h_t(j)}  \\
&= s_t(w_t(x)) * z^{h_t(w_t(x))}
\end{align*}
where $z$ is a variable. The degree of $z$ represents the index of its term's coefficient in a vector. The tensor-sketched vector $p_{i,x}'$ of the prediction for one point $x$ in $J^{(i,v)}$ for regressor $R$ may be calculated as $d_i (g_1(e_{w_1 (x)}) * g_2(e_{w_2 (x)}) * \ldots * g_\tau(e_{w_\tau (x)}) )$. Then,  $p_i' = \sum_{x \in \rho \Join J^{(i,v)}} p'_{i,x}$ and $\hat{Y}' = \sum_{i = 1}^L p'_i$. Note that $p'_i$ can be computed using a SumProd query grouped by one of the tables.

Meanwhile, the vector of labels $Y$ can be expressed as the SumProd query $\sum_{x \in \rho \Join J^{(i,v)}} x_y$. This SumProd query is equivalent to $\sum_{x \in \rho \Join J^{(i,v)}} a_1 \otimes a_2 \otimes \cdots \otimes a_\tau$, where $a_t$ corresponds to the $t$-th table $T_t$ and has the same number of elements as $T_t$ has. For some fixed $x \in \rho \Join J^{(v)}$, the elements of vector $a_t$ are all zeroes except at the index corresponding to the index of $\pi_t(x)$ in $T_t$. For $a_1, a_2, \ldots a_{\tau-1}$, the element at this index is equal to $1$. For $a_\tau$, the element at this index is equal to $x_y$. $Y'$ can be found using the sketching technique on $a_1, a_2, \ldots, a_\tau$ for each $x \in \rho \Join J^{(i,v)}$. Using the same $\tau$ 2-wise independent hash functions $h_1(j), h_2(j), \ldots h_\tau(j)$ and the same $\tau$ 2-wise independent hash functions $s_1(e), s_2(j), \ldots, s_\tau(j)$, we can have the tensor sketch of vector $Y$. Let $a_{t}(j)$ be the element in the $t$-th index of $a(i)$. Then, for each $t$ where $1 \leq t \leq \tau$ calculate
\begin{align*}
g_t(a_t) &= \sum^{|D_t|}_{j=1} s_t(j) * a_t(j) * x^{h_t(j)}  
\end{align*}
where $*$ is polynomial multiplication modulo $x^k$. Then, $Y' = \sum_{x \in \rho \Join J^{(i,v)}} g_1(a_1) * g_2(a_2) * \ldots * g_\tau(a_\tau) $. 

This way, the sketched terms $Y', \hat{Y}_1', \hat{Y}_2', \ldots, \hat{Y}_m'$ need to only be calculated one time for each node $v$ that is being evaluated for a splitting criterion. 

\begin{theorem} \label{thm:approxTime}
The tensor sketching technique can approximate $\sum_{x \in \rho \Join J^{(v)}} (r_x^2)$ grouped by all tables using $O(mL \tau)$ SumProd queries.
\end{theorem}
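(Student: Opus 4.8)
The plan is to prove the bound by counting the SumProd queries that produce the sketched vectors, rather than by re-verifying the $(1+\epsilon)$ guarantee, which is a separate consequence of Theorem~\ref{The:tensorsketch}. The argument rests entirely on the reformulation $\sum_{x \in \rho \Join J^{(v)}} (r_x^2) = ||\, Y - \hat{Y}_1 - \cdots - \hat{Y}_m \,||_2^2$ together with the linearity of the sketch operator $S$. Linearity is the crux: it tells us that the sketch of the residual vector equals $Y' - \hat{Y}_1' - \cdots - \hat{Y}_m'$, so we never have to materialize the residual vector or any cross term $\hat{y}_i(x)\hat{y}_j(x)$. Instead we sketch each of the $m+1$ vectors $Y, \hat{Y}_1, \ldots, \hat{Y}_m$ once, subtract the resulting $k$-dimensional sketches, and take the squared norm; the subtraction and the final norm are arithmetic on short vectors and cost no queries.

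First I would count the queries for a single table grouping. By the construction preceding the statement, the label sketch $Y'$ is obtained from a single SumProd query evaluating $\sum_{x \in \rho \Join J^{(v)}} g_1(a_1) * \cdots * g_\tau(a_\tau)$ grouped by that table. For each weak regressor $R_t$, the sketch $\hat{Y}_t'$ decomposes as $\sum_{i=1}^{L} p_i'$ over its $L$ leaves, and each leaf sketch $p_i' = \sum_{x \in \rho \Join J^{(i,v)}} p_{i,x}'$ is itself a single SumProd query grouped by that table. Thus one regressor costs $L$ queries, the $m$ regressors together cost $mL$ queries, and adding the one query for $Y'$ gives $mL + 1 = O(mL)$ queries per grouping.

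Then I would lift this to all tables. The quantity is required grouped by each input table, and each such grouping is handled by repeating the scheme above with $\oplus$ grouped by $T_t$. Since there are $\tau$ tables and each grouping costs $O(mL)$ queries, the total is $\tau(mL+1) = O(mL\tau)$ SumProd queries, as claimed.

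I do not anticipate a real obstacle, since the bound is a bookkeeping consequence of linearity; the content lies in the contrast with Theorem~\ref{thm:apprsumResids}, where expanding the square forced $O(m^2L^2)$ queries per grouping to enumerate every leaf pair of every regressor pair. The only step needing care is confirming that each sketched vector is computable by a \emph{single} grouped SumProd query—that the Kronecker structure $e_{w_1(x)} \otimes \cdots \otimes e_{w_\tau(x)}$ together with the hash and sign maps factors across tables, so that the polynomial product $g_1 * \cdots * g_\tau$ matches the SumProd template under a suitable semiring. That factorization is exactly what the preceding construction establishes, so once it is invoked the count follows immediately.
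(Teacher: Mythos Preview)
Your proposal is correct and follows essentially the same argument as the paper: one grouped SumProd query per leaf of each of the $m$ prior regressors (hence $mL$), plus one for $Y'$, all repeated for each of the $\tau$ tables, giving $O(mL\tau)$. The paper likewise justifies the single-query-per-leaf claim by noting that polynomial multiplication and summation form a semiring, which is exactly the ``suitable semiring'' you invoke for the factored product $g_1 * \cdots * g_\tau$.
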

\begin{proof}
One query for each leaf of each prior weak regressor, plus one query for Y'.

Each term of the form $p'_i = d_i \sum_{x \in J^{(k,v)}} (g_1(e_{w_1 (x)}) * g_2(e_{w_2 (x)}) * \ldots * g_\tau(e_{w_\tau (x)}) )$ grouped by table $T_i$ can be calculated via one SumProd query, because polynomial multiplication and summation form a semiring. The term $p_i'$ is calculated once for each of the $L$ leaves of the $m$ weak regressors in order to calculate the approximated prediction vectors $\hat{Y}_1', \hat{Y}_2', \ldots, \hat{Y}_m'$ for each weak regressor. It is necessary to repeat this calculation grouped by each of the $\tau$ input tables. As such, the term $p'_i$ is calculated $mL \tau$ times, making for $O(mL \tau)$ SumProd queries. As $Y'$ is found using just one additional SumProd query per input table $\sum_{x \in \rho \Join J^{(k,v)}} g_1(a_1) * g_2(a_2) * \ldots * g_\tau(a_\tau)$ where $\rho \in T_i$ for some input table $T_i$, the number of SumProd queries to find $\sum_{x \in \rho \Join J^{(v)}} (r_x^2)$ is $O(mL \tau)$.
\end{proof}

\begin{theorem} \label{thm:approxTotalTime}
The sketching version of Algorithm 2 can train the $m$-th regression tree with $L$ leaves using the greedy algorithm in  $O(mL^2 \tau)$ SumProd querıes.
\end{theorem}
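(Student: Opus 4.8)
The plan is to multiply the per-node query count established in Theorem~\ref{thm:approxTime} by the number of nodes in a tree with $L$ leaves. The central point is that the sketching bound of $O(mL\tau)$ SumProd queries already aggregates the cost across all $\tau$ input tables and across every leaf of each of the $m$ previously trained weak regressors, so it subsumes the cost of the other quantities gathered at a node.

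First I would fix the per-node cost for the sketching version. At each node $v$, selecting the splitting criterion requires three families of grouped aggregates: the point counts $n_{v,j,\alpha}$ and $m_{v,j,\alpha}$, the sum of residuals, and the approximate sum of squared residuals. The point counts need only $O(\tau)$ queries, one per table, exactly as in Algorithm~1. The sum of residuals needs $O(mL\tau)$ queries, since the query $\sum_{x \in \rho \Join J^{(\ell,v)}} F(x)$ is performed once per leaf $\ell$ of each of the $m$ prior regressors and repeated across the $\tau$ tables. Finally, by Theorem~\ref{thm:approxTime}, the sketched sum of squared residuals costs $O(mL\tau)$ queries in total. All three contributions are $O(mL\tau)$, so the per-node cost is $O(mL\tau)$.

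Next I would count the nodes. A rooted binary tree with $L$ leaves has $O(L)$ internal nodes, and the greedy procedure evaluates a splitting criterion at each of them. Multiplying the per-node cost $O(mL\tau)$ by the $O(L)$ nodes yields the claimed $O(mL^2\tau)$ SumProd queries.

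The step I expect to require the most care is justifying that the sketched vectors $Y', \hat{Y}_1', \ldots, \hat{Y}_m'$ are computed only once per node, rather than once per candidate threshold $\alpha$. This is where the linearity of the sketch, emphasized earlier in this section, is essential: because the sketch of a sum of the per-row vectors $p_{i,x}$ equals the sum of their sketches, the grouped SumProd queries suffice to produce every threshold's partial sketch by a single sweep over the rows of the grouping table, introducing no additional SumProd query per threshold. Verifying that this reuse is valid---so that the loop over $\alpha$ contributes nothing beyond the $O(mL\tau)$ queries already counted---is the only nontrivial ingredient; the remainder is a direct multiplication of the per-node and per-tree bounds.
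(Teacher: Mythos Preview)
Your proposal is correct and follows essentially the same approach as the paper: invoke Theorem~\ref{thm:approxTime} to bound the per-node cost at $O(mL\tau)$ and multiply by the $O(L)$ nodes of the tree. Your version is more thorough in that you explicitly verify the other per-node quantities (point counts and sum of residuals) are dominated by $O(mL\tau)$ and that the sketch is reused across thresholds, whereas the paper's proof simply cites the per-node sketching cost and multiplies.
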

\begin{proof}
When $m$ weak regressors have already been constructed and the splitting criterion is being selected for some node $v$ of the $(m+1)$-th weak regressor, the sum of residuals squared is now sketched in $O(mL \tau)$ SumProd queries as per Theorem~\ref{thm:approxTime}. Because this process is repeated at each of the $O(L)$ nodes of the $(m+1)$-th weak regressor, the number of required SumProd queries is $O(m L^2 \tau)$.
\end{proof}

\begin{corollary}
The runtime of sketched Algorithm 2 is $O(m^2 L^2 \tau d^2 n k \log n \log k)$.
\end{corollary}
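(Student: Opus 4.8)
The plan is to follow exactly the template used for the corollary bounding the exact algorithm's runtime, namely to multiply the number of SumProd queries by the cost of a single query, but with one essential modification: in the sketched setting the semiring is no longer scalar arithmetic but the ring of polynomials modulo $z^k$, so the per-query cost must be recomputed. First I would invoke Theorem~\ref{thm:approxTotalTime}, which already establishes that the sketched version of Algorithm 2 trains the $m$-th weak regressor with $L$ leaves using $O(mL^2\tau)$ SumProd queries. This reduces the problem to pinning down the runtime of one such query in the polynomial semiring.

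Next I would analyze the cost of a single SumProd query in the sketched setting. The key observation is that the $g_t$ terms, e.g. $g_t(e_{w_t(x)}) = \sum_{j=1}^{|D_t|} s_t(j)\, e_{w_t(x)}(j)\, z^{h_t(j)}$, are polynomials of degree less than $k$, and the query evaluates $\oplus$ and $\otimes$ over the commutative semiring formed by polynomial addition and polynomial multiplication modulo $z^k$. Whereas the exact algorithm's FAQ bound of $O(md^2 n\log n)$ assumes each semiring operation costs $O(1)$, here a single $\otimes$ is a multiplication of two polynomials of degree less than $k$, which costs $O(k\log k)$ via FFT, and each $\oplus$ costs $O(k)$; truncation modulo $z^k$ is free and keeps every intermediate polynomial of degree less than $k$, so no extra dependence on $k$ creeps in through growing operand sizes. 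Since the inside-out algorithm performs $O(md^2 n\log n)$ semiring operations, replacing each by an operation of cost $O(k\log k)$ gives a per-query runtime of $O(md^2 n k\log n\log k)$.

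Finally I would combine the two bounds: multiplying the $O(mL^2\tau)$ queries from Theorem~\ref{thm:approxTotalTime} by the per-query cost $O(md^2 n k\log n\log k)$ yields
\[
O(mL^2\tau)\cdot O(md^2 n k\log n\log k) = O(m^2 L^2\tau d^2 n k\log n\log k),
\]
which is exactly the claimed bound.

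I expect the main obstacle to be the middle step, namely justifying rigorously that the FAQ/inside-out runtime degrades by precisely the multiplicative factor equal to the cost of one semiring operation. This requires confirming that the $O(md^2 n\log n)$ bound of the cited Lemma is a count of semiring operations (each assumed $O(1)$) rather than a bound that already bakes in a scalar-specific model, and that substituting the polynomial semiring changes nothing in the algorithm's control flow. I would also take care to confirm that $O(k\log k)$ is genuinely the dominant per-operation cost: polynomial multiplication modulo $z^k$ via FFT dominates the $O(k)$ cost of addition, and because all operands stay of degree less than $k$ throughout, the sizes never inflate. Everything else is the same arithmetic bookkeeping as in the exact corollary.
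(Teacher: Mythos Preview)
Your proposal is correct and follows essentially the same approach as the paper: invoke Theorem~\ref{thm:approxTotalTime} for the $O(mL^2\tau)$ query count, observe that each SumProd query now runs over a polynomial semiring so its cost is the $O(md^2 n\log n)$ operation count times the $O(k\log k)$ FFT cost of a single $\otimes$, and multiply. The paper's own proof is just a terser version of exactly this argument.
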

\begin{proof}
Sketched Algorithm 2 requires $O( m L^2 \tau )$ SumProd queries as per \ref{thm:approxTotalTime}. The runtime of a SumProd query is $O(md^2n \log n )$ times the runtime of $\otimes$ which corresponds to Fast Fourier Transform. The Fast Fourier Transform runtime is $O(k \log k)$. Hence, the runtime of Algorithm 2 is $O(m^2 L^2 \tau d^2 n k \log n \log k)$.
\end{proof}

\begin{theorem}
We can obtain $(1\pm \epsilon)$ approximation of $\sum_{x \in \rho \Join J^{(v)}} r_x^2$ for simultaneously for all features and splitting criteria with probability $0.99$ when determining the splitting criterion for node $v$ of regression tree $R_{m+1}$ by setting $k=O(\frac{2+3^\tau}{\epsilon^2 \delta})$. 
\end{theorem}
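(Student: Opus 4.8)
The plan is to apply the Approximate Matrix Product property of Theorem~\ref{The:tensorsketch} to the sketching construction developed above, and then handle the union-bound over all candidate splits. First I would identify the two vectors to which the theorem applies. We wish to approximate $\sum_{x \in \rho \Join J^{(v)}} r_x^2 = \|R^{(v)}\|_2^2$ where $R^{(v)} = Y - \hat{Y}_1 - \cdots - \hat{Y}_m$, each vector having length $|J| = d^\tau$ (the domain of the Kronecker product), with zeros outside $J^{(v)}$. Setting $A = B = R^{(v)}$ in Theorem~\ref{The:tensorsketch}, the sketch matrix $S$ satisfies, for $k \geq (2+3^\tau)/(\epsilon'^2 \delta')$,
$$\Pr\left[\left\| (R^{(v)})^T S S^T R^{(v)} - \|R^{(v)}\|_2^2 \right\| \leq \epsilon' \|R^{(v)}\|_2^2 \right] \geq 1 - \delta'.$$
Because $(R^{(v)})^T S S^T R^{(v)} = \|S^T R^{(v)}\|_2^2 = \|Y' - \hat{Y}_1' - \cdots - \hat{Y}_m'\|_2^2$ is exactly the sketched quantity computed by the algorithm (using that sketching is linear, as argued above), this gives a $(1 \pm \epsilon')$ approximation of the squared residual for a single fixed split.

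Next I would address the fact that the theorem must hold \emph{simultaneously} for all features $j$ and all thresholds $\alpha$, not just one split. The key observation is that the sketch $Y' - \hat{Y}_1' - \cdots - \hat{Y}_m'$ is computed once per node $v$ using a single random draw of the hash functions, and every candidate split's residual-sum is read off from this same sketched representation. Each split corresponds to restricting the sum to the subset of rows $\rho \in T_i$ with $\rho_j < \alpha$ (or $\geq \alpha$); these restrictions are themselves approximate-matrix-product instances against the same $S$. I would therefore take a union bound over the number $N$ of distinct $(j,\alpha)$-pairs (at most $d \cdot n$, since each feature has at most $n$ distinct threshold values), setting $\delta' = \delta/N$ and $\epsilon' = \epsilon$ so that all estimates are simultaneously within $(1\pm\epsilon)$ with probability at least $1-\delta$. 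Choosing the target failure probability $\delta = 0.01$ then yields the claimed $0.99$ success guarantee.

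The main obstacle I expect is justifying the simultaneity rigorously: a naive union bound inflates $k$ by a factor of $\log N$, which would contradict the clean bound $k = O((2+3^\tau)/(\epsilon^2\delta))$ stated in the theorem. To get the stated bound without the union-bound loss, I would argue that the split-restricted quantities are not arbitrary: the relevant vectors $\{R^{(v)}_{j,\alpha}\}$ all lie in a low-dimensional subspace (spanned by the per-row contributions grouped by table $T_i$), so one can instead invoke a \emph{subspace-embedding} guarantee rather than a per-pair approximate-matrix-product bound. A subspace embedding with the same asymptotic $k$ preserves all norms in the subspace at once, absorbing the simultaneity requirement into the distortion of a single $O(\text{poly})$-dimensional subspace. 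I would thus replace the pointwise statement of Theorem~\ref{The:tensorsketch} with its subspace-embedding counterpart from \cite{avron2014subspace}, verify that the collection of candidate residual vectors spans a subspace of dimension independent of $N$, and conclude with $k = O((2+3^\tau)/(\epsilon^2\delta))$ and $\delta = 0.01$.
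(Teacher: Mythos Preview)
Your first half is exactly the paper's argument: apply Theorem~\ref{The:tensorsketch} with $A=B=R^{(v)}$, count the $O(nd)$ feature/threshold pairs, and union-bound. The paper stops there: it sets the per-event failure probability $\delta'=O(1/(nd))$ so that the union of the $O(nd)$ bad events has probability at most a constant (hence the $0.99$). In the theorem statement, the symbol $\delta$ \emph{is} that per-event parameter, not the final failure probability; the resulting sketch dimension is $k=O\bigl((2+3^\tau)\,nd/\epsilon^2\bigr)$.

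The gap in your proposal is that you misread the role of $\delta$, concluded that the union bound ``contradicts'' the stated $k$, and then reached for a subspace-embedding argument you do not need. Worse, that alternative does not deliver what you claim: the split-restricted residual vectors $R^{(v)}_{j,\alpha}$ are indexed by up to $n$ thresholds per feature and in general span a subspace of dimension $\Theta(n)$ (each threshold partitions the rows of $T_i$ differently), so the subspace dimension is not independent of $N$. A tensor-sketch subspace embedding for a rank-$r$ subspace requires $k$ growing at least linearly (in fact quadratically) in $r$, so you would not beat the naive union bound anyway. Drop the second half; the union bound with $\delta=O(1/(nd))$ is the intended proof.
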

\begin{proof}
To evaluate node $v$ of regressor $R_{m+1}$ for a splitting criterion, it is necessary to evaluate splitting on $O(d)$ features, each at of the $O(n)$ possible thresholds. Thus, there are $O(nd)$ residual squared terms that must be sketched. We use Theorem \ref{The:tensorsketch} and union bound. To apply union-bound, $\delta$ must be set to $O(\frac{1}{n d})$ in order to obtain constant probability for the event that all the approximations are within $1+\epsilon$ accuracy.
\end{proof}

\bibliographystyle{plain}
\bibliography{references}

\begin{thebibliography}{10}

\bibitem{abo2021relational}
Mahmoud Abo-Khamis, Sungjin Im, Benjamin Moseley, Kirk Pruhs, and Alireza
  Samadian.
\newblock A relational gradient descent algorithm for support vector machine
  training.
\newblock In {\em Symposium on Algorithmic Principles of Computer Systems
  (APOCS)}, pages 100--113. SIAM, 2021.

\bibitem{abo2018database}
Mahmoud Abo~Khamis, Hung~Q Ngo, XuanLong Nguyen, Dan Olteanu, and Maximilian
  Schleich.
\newblock In-database learning with sparse tensors.
\newblock In {\em Proceedings of the 37th ACM SIGMOD-SIGACT-SIGAI Symposium on
  Principles of Database Systems}, pages 325--340, 2018.

\bibitem{abo2016faq}
Mahmoud Abo~Khamis, Hung~Q Ngo, and Atri Rudra.
\newblock Faq: questions asked frequently.
\newblock In {\em Proceedings of the 35th ACM SIGMOD-SIGACT-SIGAI Symposium on
  Principles of Database Systems}, pages 13--28, 2016.

\bibitem{atserias2008size}
Albert Atserias, Martin Grohe, and D{\'a}niel Marx.
\newblock Size bounds and query plans for relational joins.
\newblock In {\em 2008 49th Annual IEEE Symposium on Foundations of Computer
  Science}, pages 739--748. IEEE, 2008.

\bibitem{avron2014subspace}
Haim Avron, Huy~L Nguy{\^e}n, and David~P Woodruff.
\newblock Subspace embeddings for the polynomial kernel.
\newblock In {\em NIPS}, volume~1, page~4, 2014.

\bibitem{cheng2019nonlinear}
Zhaoyue Cheng and Nick Koudas.
\newblock Nonlinear models over normalized data.
\newblock In {\em 2019 IEEE 35th International Conference on Data Engineering
  (ICDE)}, pages 1574--1577. IEEE, 2019.

\bibitem{rajesh2021indatabase}
Rajesh Jayaram, Alireza Samadian, David~P. Woodruff, and Peng Ye.
\newblock In-database regression in input sparsity time.
\newblock In {\em International Conference on Machine Learning}, pages
  7360--7369. PMLR, 2021.

\bibitem{acdc2018}
Mahmoud~Abo Khamis, Hung~Q. Ngo, XuanLong Nguyen, Dan Olteanu, and Maximilian
  Schleich.
\newblock Ac/dc: In-database learning thunderstruck.
\newblock In {\em Proceedings of the Second Workshop on Data Management for
  End-To-End Machine Learning}, DEEM'18, New York, NY, USA, 2018. Association
  for Computing Machinery.

\bibitem{kumar2015learning}
Arun Kumar, Jeffrey Naughton, and Jignesh~M Patel.
\newblock Learning generalized linear models over normalized data.
\newblock In {\em Proceedings of the 2015 ACM SIGMOD International Conference
  on Management of Data}, pages 1969--1984, 2015.

\bibitem{loh2014fifty}
Wei-Yin Loh.
\newblock Fifty years of classification and regression trees.
\newblock {\em International Statistical Review}, 82(3):329--348, 2014.

\bibitem{moseley2020relational}
Benjamin Moseley, Kirk Pruhs, Alireza Samadian, and Yuyan Wang.
\newblock Relational algorithms for k-means clustering.
\newblock {\em arXiv preprint arXiv:2008.00358}, 2020.

\bibitem{rendle2013scaling}
Steffen Rendle.
\newblock Scaling factorization machines to relational data.
\newblock {\em Proceedings of the VLDB Endowment}, 6(5):337--348, 2013.

\bibitem{schleich2016learning}
Maximilian Schleich, Dan Olteanu, and Radu Ciucanu.
\newblock Learning linear regression models over factorized joins.
\newblock In {\em Proceedings of the 2016 International Conference on
  Management of Data}, pages 3--18, 2016.

\bibitem{woodruff2014sketching}
David~P Woodruff.
\newblock Sketching as a tool for numerical linear algebra.
\newblock {\em Foundations and Trends in Theoretical Computer Science}, 2014.

\bibitem{yang2020towards}
Keyu Yang, Yunjun Gao, Lei Liang, Bin Yao, Shiting Wen, and Gang Chen.
\newblock Towards factorized svm with gaussian kernels over normalized data.
\newblock In {\em 2020 IEEE 36th International Conference on Data Engineering
  (ICDE)}, pages 1453--1464. IEEE, 2020.

\end{thebibliography}

\appendix
\section{Database Background}
Given a tuple $x$, define $\Pi_{F}(x)$ to be projection of $x$ onto the set of features $F$ meaning $\Pi_{F}(x)$ is a tuple formed by keeping the entries in $x$ that are corresponding to the feature in $F$. For example let $T$ be a table with columns $(A,B,C)$ and let $x = (1,2,3)$ be a tuple of $T$, then $\Pi_{\{A,C\}}(x) = (1,3)$. 

\begin{definition}[Join]
Let $T_1,\dots, T_m$ be a set of tables with corresponding sets of columns/features $F_1,\dots,F_m$ we define the join of them $J=T_1 \Join \dots \Join T_m$ as a table such that the set of columns of $J$ is $\bigcup_i F_i$, and $x\in J$ if and only if $\Pi_{F_i}(x) \in T_i$.
\end{definition}

Note that the above definition of join is consistent with the definition written in Section \ref{Sec:FAQ} but offers more intuition about what the join operation means geometrically.

\begin{definition}[Join Hypergraph]
Given a join $J=T_1 \Join \dots \Join T_m$, the hypergraph associated with the join is $H=(V,E)$ where $V$ is the set of vertices and for every column $a_i$ in $J$ there is a vertex $v_i$ in $V$, and for every table $T_i$ there is a hyper-edge $e_i$ in $E$ that has the vertices associated with the columns of $T_i$.
\end{definition}

\begin{theorem}[AGM Bound \cite{atserias2008size}]
\label{appendix:agm}
Given a join $J=T_1 \Join \dots \Join T_m$ with $d$ columns and its associated hypergraph $H=(V,E)$, and let $C$ be a subset of $\col(J)$, let $X = (x_1, \dots, x_m)$ be any feasible solution to the following Linear Programming:
\begin{alignat*}{3}
 & \text{minimize} & \sum_{j=1}^{m} \log(|T_j|)x_{j}& \\
 & \text{subject to} \quad& \sum_{\mathclap{{j:v \in e_{j}}}}x_{j}& \geq 1, & v \in C\\
                 && 0 \leq x_{j}& \leq 1,\quad & j &=1 ,..., t
\end{alignat*}
Then $\prod_i |T_i|^{x_i}$ is an upper bound for the cardinality of $\Pi_C(J)$, this upperbound is tight if $X$ is the optimal answer.
\end{theorem}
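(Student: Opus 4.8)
The plan is to prove the two assertions separately: first the upper bound $|\Pi_C(J)| \le \prod_i |T_i|^{x_i}$ for an arbitrary feasible $X$, and then tightness when $X$ is an optimizer. For the upper bound I would use the entropy method together with the fractional version of Shearer's Lemma, which is tailored to exactly the covering constraint appearing in the LP. Note first that the box constraints $x_j \le 1$ play no role in the upper bound: any feasible $X$ satisfies the covering constraint, which is all that is needed.

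Concretely, let $t$ be a tuple drawn uniformly at random from $\Pi_C(J)$ and, for each column $v \in C$, let $Z_v$ be the value of $t$ in that column. Since $t$ is uniform over a set of size $|\Pi_C(J)|$, its Shannon entropy satisfies $H\big((Z_v)_{v\in C}\big) = \log |\Pi_C(J)|$. For each table $T_j$ let $A_j = e_j \cap C$ be the columns of $T_j$ lying in $C$, and write $Z_{A_j} = (Z_v)_{v \in A_j}$. By the definition of the join, every $x \in J$ has $\Pi_{F_j}(x) \in T_j$, so $Z_{A_j}$ is supported on the set $\Pi_{A_j}(T_j)$ of size at most $|T_j|$, giving $H(Z_{A_j}) \le \log |T_j|$. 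The feasibility constraint $\sum_{j : v \in e_j} x_j \ge 1$ for every $v \in C$ is precisely the hypothesis of fractional Shearer's Lemma, which yields $H\big((Z_v)_{v\in C}\big) \le \sum_j x_j\, H(Z_{A_j})$. Chaining the inequalities gives $\log |\Pi_C(J)| \le \sum_j x_j \log |T_j|$, and exponentiating yields the claimed bound.

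For tightness I would exhibit a matching instance using LP duality. Writing $c_j = \log |T_j|$, the dual of the cover LP has a fractional-independent-set optimum $(y_v)_{v \in C}$ with $\sum_{v \in e_j} y_v \le c_j$ and $\sum_v y_v = \sum_j c_j x_j$ at optimality; moreover the box constraints are without loss of generality, since any coordinate exceeding $1$ already over-covers every vertex it touches, so they can be dropped for the duality argument. I would build a product instance in which each column $v \in C$ ranges over a domain $D_v$ of size roughly $N^{y_v}$ (columns outside $C$ pinned to a single value), and each $T_j$ is the full Cartesian product over its columns, so that $|T_j| \approx N^{\sum_{v \in A_j} y_v}$ and the join is the full product $\Pi_C(J) = \prod_{v \in C} D_v$ of size $N^{\sum_v y_v}$. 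Evaluating the bound gives $\prod_j |T_j|^{x_j} = N^{\sum_v y_v \sum_{j: v \in e_j} x_j}$; by complementary slackness every column $v$ with $y_v > 0$ is covered with total weight exactly $1$, so the exponent collapses to $\sum_v y_v$ and the bound is attained.

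The hard part will be the tightness direction. The entropy/Shearer upper bound is essentially mechanical once the right lemma is in hand, but the product construction raises an integrality issue: the domain sizes $N^{y_v}$ are generally not integers and the prescribed $|T_j|$ are fixed, so the construction realizes the bound only asymptotically as $N \to \infty$ (or up to lower-order rounding factors). Stating tightness precisely therefore requires this limiting argument, together with a careful verification that the rounded table in the construction indeed has the intended size and that complementary slackness still forces $\sum_{j: v \in e_j} x_j = 1$ on the support of $y$. I expect this bookkeeping, rather than the entropy inequality, to be the principal obstacle.
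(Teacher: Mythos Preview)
The paper does not prove this theorem at all: it is stated in the appendix as a cited background result from Atserias, Grohe, and Marx, with no accompanying argument. There is therefore nothing in the paper to compare your proposal against.

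That said, your sketch is the standard proof. The upper bound via a uniform sample from $\Pi_C(J)$, fractional Shearer applied with the covering weights $x_j$, and the bound $H(Z_{A_j}) \le \log|T_j|$ is exactly the entropy argument used in the original reference and its successors. Your observation that the constraints $x_j \le 1$ are inactive for the upper bound is correct: since $\log|T_j| \ge 0$, truncating any $x_j > 1$ to $1$ preserves feasibility and does not increase the objective. For tightness, the LP-duality product construction you describe is again the standard one, and you correctly identify the only real subtlety, namely that the matching instance is built with domain sizes $N^{y_v}$ that must be rounded, so tightness holds asymptotically (or up to lower-order factors) rather than exactly for arbitrary prescribed $|T_j|$. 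One small clarification: complementary slackness gives $\sum_{j:v\in e_j} x_j = 1$ only on the support of $y$, which is all you use, so your collapse of the exponent is fine; the columns with $y_v = 0$ have trivial domains and contribute nothing on either side.
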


\begin{definition}[Acyclic Join]
We call a join query (or a relational database schema) \textbf{acyclic} if one can repeatedly apply one of the two operations and convert the set of tables to an empty set:
\begin{enumerate}
    \item Remove a column that is only in one table.
    \item Remove a table for which its columns are fully contained in another table.
\end{enumerate}
\end{definition}

\begin{definition}[Hypertree Decomposition]
Let $H=(V,E)$ be a hypergraph and $T=(V',E')$ be a tree with a subset of $V$ associated to each vertex in $v' \in V'$ called \textbf{bag} of $v'$ and show it by $b(v') \subseteq V$. $T$ is called a \textbf{hypertree decomposition} of $H$ if the following holds:
\begin{enumerate}
    \item For each hyperedge $e \in E$ there exists $v' \in V'$ such that $e \subseteq b(v')$
    \item For each vertex $v \in E$ the set of vertices in $V'$ that have $v$ in their bag are all connected in $T$.
\end{enumerate}
\end{definition}

\begin{definition}
Let $H=(V,E)$ be a join hypergraph and $T=(V',E')$ be its hypertree decomposition. For each $v' \in V'$, let $X^{v'} = (x_1^{v'},x_2^{v'},\dots, x_m^{v'})$ be the optimal solution to the following linear program: $\texttt{min}  \sum_{j=1}^{t} x_{j}$,  $\text{subject to }  \sum_{j:v_{i} \in e_{j} }x_{j} \geq 1, \forall v_i \in b(v')$ where $0 \leq x_{j} \leq 1$ for each $j \in [t]$.
Then the \textbf{width of $v'$} is $\sum_i x^{v'}_i$ denoted by $w(v')$ and the \textbf{fractional width of $T$} is $\max_{v' \in V'} w(v')$.
\end{definition}

\begin{definition}[fhtw]
Given a join hypergraph $H=(V,E)$, the \textbf{fractional hypertree width of $H$}, denoted by fhtw, is the minimum fractional width of its hypertree decomposition. Here the minimum is taken over all possible hypertree decompositions.
\end{definition}

\begin{observation}
The fractional hypertree width of an \allowbreak acyclic join is $1$, and each bag in its hypertree decomposition is a subset of the columns in some input table.
\end{observation}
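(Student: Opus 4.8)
The plan is to prove both claims at once by constructing one explicit hypertree decomposition of fractional width exactly $1$ in which every bag is the column set of a single input table, and then arguing that no decomposition can achieve width below $1$. The backbone of the construction is the classical \emph{join tree} of an acyclic hypergraph: a tree whose nodes are the hyperedges $e_1,\dots,e_\tau$ (the column sets of $T_1,\dots,T_\tau$) and whose bag at the node for $e_j$ is simply $b_j = e_j$. Granting such a tree, the second claim is immediate, since every bag is by definition the set of columns of one input table.

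The core step is to produce this join tree from acyclicity and to verify it is a legitimate hypertree decomposition, which I would do by induction on $\tau$ using the two reduction operations in the definition of an acyclic join. For $\tau = 1$ the tree is a single node. For the inductive step, the key fact is that an acyclic hypergraph with at least two edges has an \emph{ear}: a hyperedge $e_i$ and a witness $e_j$ with $j \neq i$ such that every column of $e_i$ shared with any other table lies in $e_j$, equivalently, deleting the columns private to $e_i$ via operation~1 turns $e_i$ into a subset of $e_j$, which operation~2 then removes. I would remove $e_i$, apply the inductive hypothesis to the remaining acyclic hypergraph to obtain a join tree $T'$, and attach the node for $e_i$ as a leaf adjacent to the node for $e_j$. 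The edge-coverage condition is trivial because $e_j \subseteq b_j = e_j$. For the connectedness condition I would check each column $v$: if $v \notin e_i$ the set of bags containing $v$ is unchanged from $T'$; if $v$ is private to $e_i$ it appears only in $b_i$; and if $v \in e_i$ is shared, then $v \in e_j$ by the ear property, so the nodes containing $v$ in $T'$ form a connected subtree through $e_j$, and appending the leaf $e_i$ adjacent to $e_j$ keeps them connected.

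With the decomposition in hand, I would finish by bounding the width. Each bag $b_j = e_j$ can be fractionally covered by assigning weight $1$ to table $T_j$ and $0$ to all others: this satisfies $\sum_{k : v \in e_k} x_k \geq 1$ for every $v \in e_j$ and has total weight $1$, so $w(b_j) \leq 1$; since $e_j$ is nonempty, any feasible cover has total weight at least $1$, giving $w(b_j) = 1$. Hence the fractional width of the tree is $1$ and $\fhtw \leq 1$. For the matching lower bound, any hypertree decomposition must contain a bag covering $e_1$ by the edge-coverage condition, and that bag contains the nonempty set $e_1$, so its fractional cover has weight at least $1$; therefore $\fhtw \geq 1$ and $\fhtw = 1$.

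I expect the main obstacle to be the ear-existence claim together with the verification that removing an ear preserves acyclicity, which is where all the real content of the acyclicity hypothesis is used; everything after that (coverage, connectedness, and the width computation) is routine. An alternative that sidesteps re-proving this folklore fact is to cite the standard GYO/join-tree characterization of acyclic hypergraphs and simply read off the bags and the running-intersection property, but I would prefer the short inductive argument above to keep the appendix self-contained.
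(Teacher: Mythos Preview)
The paper records this statement as an \emph{observation} and supplies no proof at all, so there is no argument in the paper to compare yours against. Your proposal is the standard join-tree construction and is correct: build the tree by ear-removal induction, take each bag to be the full column set of one input table, and note that the integral cover assigning weight $1$ to that single table and $0$ elsewhere is both feasible and optimal for the bag's LP. The second clause of the observation then drops out of your choice of bags, and the lower bound $\fhtw\ge 1$ follows exactly as you say from edge-coverage of any nonempty $e_j$.

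You are right that the only place with real content is the ear lemma. Under the paper's GYO definition it goes through cleanly if you look at the \emph{first} table eliminated by operation~2 in some successful reduction sequence: every column deleted from it beforehand by operation~1 was private to it already in the original hypergraph (no tables have yet been removed, so no column can have \emph{become} private), hence all of its shared columns lie in the witnessing table---that is your ear. For preservation of acyclicity after deleting this ear, replay the same GYO sequence on $H\setminus\{e_i\}$: the operation-1 steps that targeted columns private to $e_i$ are now vacuous, the remaining operation-1 steps still apply, and after them you reach exactly the hypergraph the original sequence reached right after deleting $e_i$, so the tail of the sequence finishes the job. Filling in that paragraph would make your appendix strictly more complete than the paper's.
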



\begin{theorem}[Inside-out \cite{abo2016faq}]
\label{thm:insideout}
There exists an algorithm to evaluate a SumProd query in time $O(T md^2 n^{\fhtw} \log(n))$ where $\fhtw$ is the fractional hypertree width of the query and $T$ is the time needed to evaluate $\oplus$ and $\otimes$ for two operands. The same algorithm with the same time complexity can be used to evaluate SumProd queries grouped by one of the input tables.
\end{theorem}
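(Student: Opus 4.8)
The plan is to establish the theorem by exhibiting the \emph{InsideOut} variable-elimination algorithm and bounding its cost bag-by-bag against the AGM bound of Theorem~\ref{appendix:agm}. First I would fix an optimal fractional hypertree decomposition $T=(V',E')$ of the join hypergraph $H$, so that $\max_{v'\in V'} w(v') = \fhtw$, and root it arbitrarily. For each bag $b(v')$ I would materialize the \textbf{bag relation}, namely the join of the projections of the input tables onto $b(v')$. By the AGM bound this relation has at most $\prod_i |T_i|^{x_i^{v'}} \le n^{w(v')} \le n^{\fhtw}$ tuples, and a worst-case-optimal join algorithm produces it in time $O(n^{\fhtw}\log n)$, where the $\log n$ accounts for the sorting and indexing structures used to intersect the tables. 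To each bag I would also attach, as a semiring weight on every tuple, the $\otimes$-product of the functions $q_f$ for the features $f$ assigned to that bag; since at most $d$ functions are attached, each $\otimes$ costs $T$, and evaluating each $q_f$ costs $O(m)$, this weighting costs $O(T\,m\,d\, n^{\fhtw})$ per bag.

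Next I would run elimination bottom-up. Processing a bag $v'$ consists of forming, for every tuple of the bag relation, the $\otimes$-product of its own weight with the messages already received from the children of $v'$, and then summing out ($\oplus$) the variables that occur in $b(v')$ but not in the bag of its parent. The correctness of folding the sum through the product is exactly distributivity of $\otimes$ over $\oplus$, which the semiring guarantees; this is what licenses replacing the single global sum over $J$ by a sequence of local sums. The invariant I would maintain is that every intermediate message is indexed only by the variables shared between $v'$ and its parent, so it is itself bounded in size by the AGM bound of that separator, at most $n^{\fhtw}$. Each bag therefore contributes $O(T\,m\, d\, n^{\fhtw}\log n)$, and summing over the $O(d)$ bags — each combining $O(d)$ incident factors and child messages — yields the stated $O(T\,m\,d^2\,n^{\fhtw}\log n)$ bound.

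The grouped-by variant follows by declining to eliminate the variables belonging to the designated table $T_i$: I would place those variables in the root bag (equivalently, position them last in the elimination order) and treat them as free. The message arriving at the root is then a table indexed by the values of $T_i$'s columns, one entry of which is the answer for each row $\rho \in T_i$; since $|T_i| \le n \le n^{\fhtw}$ (as $\fhtw \ge 1$ for any nonempty query), carrying these free coordinates through the computation does not change the asymptotic cost, so the same running time holds.

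The main obstacle is the running-time accounting in the second step: one must certify that no intermediate relation ever exceeds the fractional-width bound. This is precisely where the hypertree-decomposition hypotheses do the work — the running-intersection (connectedness) property of the bags guarantees that every separator and every materialized bag relation admits a fractional edge cover of value at most $\fhtw$, so that Theorem~\ref{appendix:agm} applies uniformly at every step — and where a worst-case-optimal join, rather than a naive sequence of pairwise joins, is essential to actually meet the $n^{\fhtw}$ size bound in $O(n^{\fhtw}\log n)$ time rather than blowing up through oversized intermediate products.
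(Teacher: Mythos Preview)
The paper does not provide a proof of this theorem at all: it is stated in the appendix as a citation of the external result of Abo Khamis, Ngo, and Rudra \cite{abo2016faq}, with no argument supplied. There is therefore nothing in the paper to compare your proposal against.

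That said, your sketch is a reasonable high-level outline of how the InsideOut algorithm and its analysis proceed in the cited work: fix an optimal fractional hypertree decomposition, materialize each bag relation within the AGM bound using a worst-case-optimal join, eliminate bottom-up with distributivity justifying the folding of $\oplus$ through $\otimes$, and maintain the invariant that intermediate messages are indexed only by separator variables so that their size is again AGM-bounded. The grouped-by argument via free variables at the root is also the standard one. One quibble on the bookkeeping: your attribution of the $m$ factor to ``evaluating each $q_f$ costs $O(m)$'' is off. In the appendix's notation $m$ is the number of input tables (the main body calls this $\tau$), not a per-function evaluation cost; the $m$ and $d^2$ factors in the stated bound come from the number of hyperedges and the number of bags and incident factors processed during elimination, not from the cost of computing a single $q_f$. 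This is a minor mislabeling rather than a structural error in the argument.
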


\end{document}